\let\NAT@parse\undefined
\theoremstyle{definition}
\newtheorem{theorem}{Theorem}
\newtheorem{definition}[theorem]{Definition}
\newtheorem{proposition}[theorem]{Proposition}
\newtheorem{example}[theorem]{Example}
\theoremstyle{definition}
\newcommand{\defi}[1]{\emph{#1}}
\title{\LARGE \bf Efficient Reachable Sets on Lie Groups Using Lie Algebra Monotonicity and Tangent Intervals}
\author{Akash Harapanahalli and Samuel Coogan%
\thanks{*This work was supported in part by the Air Force Office of Scientific Research under grant FA9550-23-1-0303 and the National Science Foundation under award \#2219755.}%
\thanks{Akash Harapanahalli and Samuel Coogan are with the School of Electrical and Computer Engineering, Georgia Institute of Technology, Atlanta, GA, USA, 30318. \{\texttt{aharapan},\texttt{sam.coogan}\}\texttt{@gatech.edu}}%
}
\newcommand{\smallconc}[2]{\begin{bsmallmatrix} #1 \\ #2 \end{bsmallmatrix}}
\newcommand{\leqse}{\leq_{\mathrm{SE}}}
\newcommand{\so}{\mathfrak{so}}
\newcommand{\dbrak}[1]{\left\llbracket #1 \right\rrbracket}
\newcommand{\cleq}{\preceq}
\newcommand{\dexp}{\operatorname{dexp}}
\newcommand{\R}{\mathbb{R}}
\newcommand{\I}{\mathbb{I}}
\newcommand{\IR}{\mathbb{IR}}
\newcommand{\calK}{\mathcal{K}}
\newcommand{\calU}{\mathcal{U}}
\newcommand{\calX}{\mathcal{X}}
\newcommand{\bbI}{\mathbb{I}}
\newcommand{\bbR}{\mathbb{R}}
\newcommand{\bbS}{\mathbb{S}}
\newcommand{\bfu}{\mathbf{u}}
\newcommand{\sfE}{\mathsf{E}}
\newcommand{\sfF}{\mathsf{F}}
\newcommand{\sfG}{\mathsf{G}}
\newcommand{\sfH}{\mathsf{H}}
\newcommand{\ul}[1]{\underline{#1}}
\newcommand{\ulu}{\ul{u}}
\newcommand{\ulv}{\ul{v}}
\newcommand{\ulw}{\ul{w}}
\newcommand{\ulx}{\ul{x}}
\newcommand{\ulF}{\ul{F}}
\newcommand{\ol}[1]{\overline{#1}}
\newcommand{\olu}{\ol{u}}
\newcommand{\olv}{\ol{v}}
\newcommand{\olw}{\ol{w}}
\newcommand{\olx}{\ol{x}}
\newcommand{\olF}{\ol{F}}
\newcommand{\frakg}{\mathfrak{g}}
\newcommand{\frakx}{\mathfrak{x}}
\definecolor{dblue}{rgb}{.098,.243,.424}
\definecolor{dcompb}{RGB}{157,35,0}  %
\begin{document}

\maketitle
\thispagestyle{empty}
\pagestyle{empty}

\begin{abstract}
In this paper, we efficiently compute overapproximating reachable sets for control systems evolving on Lie groups, building off results from monotone systems theory and geometric integration theory. 
We consider intervals in the tangent space, which describe real sets on the Lie group through the exponential map.
A local equivalence between the original system and a system evolving on the Lie algebra allows existing interval reachability techniques to apply in the tangent space. 
Using interval bounds of the Baker-Campbell-Hausdorff formula, these reachable set estimates are extended to arbitrary time horizons in an efficient Runge-Kutta-Munthe-Kaas integration algorithm. 
The algorithm is demonstrated through consensus on a torus and attitude control on $SO(3)$.
\end{abstract}

\section{INTRODUCTION}

One way to verify the safe behavior of a complex control system is to overapproximate its reachable set, the set of all possible states the system might reach under uncertainties in initial conditions and input.
When working with safety-critical systems, computing such reachable sets in a computationally efficient manner can help to verify the satisfaction of goal and safety specifications.
There has been a growing body of work in efficiently computing reachable sets for control systems under uncertainty~\cite{XC-SS:22}, including several tools such as CORA~\cite{CORA} for polytope-based reachability, JuliaReach~\cite{JuliaReach} using Taylor model flowpipes, and DynIBEX~\cite{DynIBEX} for robust Runge-Kutta schemes.
However, almost every tool existing in the literature deals with systems evolving on Euclidean state spaces.

Many real mechanical systems evolve on manifolds, which locally resemble vector spaces, but have different global properties. For example, 
many robotics applications use Lie groups to model rigid body motions~\cite{JMS:13}, the attitude of a spacecraft~\cite{PC:84} evolves on $SO(3)$, and the Hamiltonian control system~\cite{PEC-AVDS:87} evolves on a symplectic manifold. 
Manifold state spaces have been studied extensively in the field of Geometric Control~\cite{FB-AL:05}---and working directly with manifolds often allows one to better capture the underlying geometric properties of a state space.

Monotone systems theory~\cite{DA-EDS:03} and mixed monotone systems theory~\cite{SC-MA:15} allows the computation of a reachable set at the relatively small cost of simulating only two trajectories of the system.
There has been some work in characterizing coordinate-free structures similar to monotonicity on manifolds.
Since manifolds are not generally globally orderable, cone fields~\cite{JL:89} provide a local notion of ordering.
Differential positivity~\cite{FF-RS:16}, the infinitesimal version of monotonicity, uses these cone fields to study a Perron-Frobenius theory.
Differential positivity has been studied in Lie groups~\cite{CM-RS:18a}, and homogeneous spaces~\cite{CM-RS:18b}.
However, these works study purely infinitesimal properties, and do not examine set-valued properties like reachable sets. 

There is a wealth of literature in geometric numerical integration techniques in Lie groups and Homogeneous spaces~\cite{AI-HMK-SN-AZ:00,EH-etal:06} that capture underlying geometric structure, such as the Runge-Kutta-Munthe-Kaas algorithm~\cite{HMK:98,HMK:99} and the Magnus expansion~\cite{SB-etal:09}.
In this paper, we seek to develop a similar geometric method for efficient reachable set computation on Lie groups.

\begin{figure}
    \centering
    \begin{tikzpicture}[thick,scale=1, every node/.style={scale=1}]
        \draw[fill=black!50, draw=none]
          (0, 0) to[out=20, in=140] (3, -0.5) to [out=60, in=160]
          (8, 1) to[out=130, in=75] cycle;
        
        \shade[thin, left color=black!10, right color=black!50, draw=none]
          (0, 0) to[out=10, in=140] (5, -1) to [out=60, in=190] 
          (8, 1) to[out=130, in=75] cycle;

        \coordinate (g) at (1.5,1) ;

        \fill[fill=blue!30,opacity=.2] (0,0.25) -- (6,-0.25) -- (7,2.75) -- (1,3) -- cycle;
        \node[rotate=0] at (1.75,2.5) {$T_{\mathring{x}}\calX\simeq\frakx$};
        \draw[->] (g) -- ++(1.5,-0.125);
        \draw[->] (g) -- ++(0.4, 1.2);

        \coordinate (DX0) at ($0.125*(6, -0.5)$);
        \coordinate (DY0) at ($0.175*(1, 3)$);
        \coordinate (ulT0) at ($(g) - 0.5*(DX0) - 0.5*(DY0)$) ;
        \coordinate (olT0) at ($(ulT0) + (DX0) + (DY0)$) ;
        \fill[draw=blue,fill=blue!30,opacity=.8] (ulT0) -- ++(DX0) -- ++(DY0) -- ++($-1*(DX0)$) -- cycle;
        \fill (ulT0) circle (2pt) ;
        \fill (olT0) circle (2pt) ;

        \coordinate (TT) at (4,2);
        \coordinate (DXT) at ($0.2*(6, -0.5)$);
        \coordinate (DYT) at ($0.3*(1, 3)$);
        \coordinate (ulTT) at ($(TT) - 0.5*(DXT) - 0.5*(DYT)$) ;
        \coordinate (olTT) at ($(ulTT) + (DXT) + (DYT)$) ;
        \coordinate (gT) at (4.5,1.25) ;

        \shade[thin, left color=red!30, right color=red!60, draw=red]
          ($(gT)+(-0.75,-0.25)$) to[out=0, in=140] ($(gT)+(0.4,-0.6)$) to [out=70, in=-150] 
          ($(gT)+(0.8,0.05)$) to[out=140, in=10] ($(gT)+(-0.35,0.4)$)
          to[out=-140, in=70]  cycle;

        \draw[->,thick] (TT) -- (gT) node[midway, right] {$\exp$};
        
        \fill[draw=blue,fill=blue!30,opacity=.8] (ulTT) -- ++(DXT) -- ++(DYT) -- ++($-1*(DXT)$) -- cycle;
        \fill (ulTT) circle (2pt) node[above left] {$\ul\Theta(t)$};
        \fill (olTT) circle (2pt) node[below right] {$\ol\Theta(t)$};
        \draw[->,thick] (ulT0) to[out=0, in=-120] (ulTT);
        \draw[->,thick] (olT0) to[out=0, in=-120] (olTT);

        \fill (g) circle (2pt) node[above left] {$\mathring{x}$};
        \node at (7.5, 0.5) {$\calX$} ;

    \end{tikzpicture}
    \caption{A pictoral representation of Theorems~\ref{thm:smalltreach} and~\ref{thm:embeddedLiealg}. 
    An interval in the tangent space $T_{\mathring{x}}\calX$, pictured in blue, is evolved to time $t$ using monotone systems theory in the Lie algebra. The tangent interval at $t$ is exponentiated to the Lie group, pictured in red, overapproximating the reachable set at $t$. 
    }
    \label{fig:enter-label}
\end{figure}
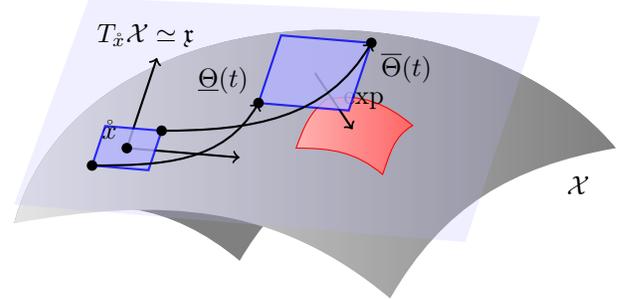

\emph{Contributions:} \ In this paper, we derive an approach inspired by monotone systems theory and geometric integration theory to efficiently overapproximate reachable sets for control systems evolving on Lie groups.
In Proposition~\ref{prop:canonicalcoords}, we discuss the local equivalence of a control system on a Lie group with a control system evolving on the Lie algebra.
Definition~\ref{def:tangentinterval} introduces the \emph{tangent interval}, which is an interval living in the tangent space to a manifold. Coupled with the exponential map, the tangent interval allows us to study real sets on the Lie group state space.
When the equivalent system is monotone, in Theorem~\ref{thm:smalltreach}, for small $t$, we can compute the reachable set from an exponentiated tangent interval by flowing two trajectories in the Lie algebra.
When the equivalent system is not monotone, but the cone in the Lie algebra is simplicial, in Theorem~\ref{thm:embeddedLiealg}, we can similarly compute the reachable set by flowing one trajectory of an embedding system of twice the dimension.
Using an inclusion function for the Baker-Campbell-Hausdorff (BCH) formula, Proposition~\ref{prop:recenterviabch} describes how we can ``recenter'' the tangent interval to extend this reachable set to any desired $T>0$, culminating in Algorithm~\ref{alg:RKMK-Reach}, which builds on the Runge-Kutta-Munthe-Kaas numerical integration method to compute overapproximating reachable sets of the system.
In Proposition~\ref{prop:abelian}, we show how this approach simplifies for abelian Lie groups.
Finally, in Section~\ref{sec:casestudies}, we demonstrate Algorithm~\ref{alg:RKMK-Reach} on two case studies: coupled oscillators evolving on a torus, and attitude control on $SO(3)$.

\section{NOTATION AND BACKGROUND} \label{sec:background}

\subsection{Interval analysis and (mixed) monotone systems}

A (pointed) \defi{cone} is a convex subset $K\subset V$ of a vector space $V$ such that $K + K \subset K$, $\lambda K \subset K$ for every $\lambda \geq 0$, and $K\cap -K = \{0\}$.
A pointed cone induces a partial order on $V$ as follows: for $x,y\in V$, $x\leq_K y \iff y - x\in K$. For $x\leq_K y$, denote the $K$-interval as $[x,y]_K = \{z\in V : x\leq_K z \leq_K y\}$. 
Let $\bbI_KV$ be the set of all $K$-intervals.

Let $V$ and $W$ be $n$- and $m$-dim vector spaces with cones $K$ and $C$ inducing partial orders $\leq_K$ and $\leq_C$. Given a map $f:V\to W$, the map $\sfF=[\ul\sfF,\ol\sfF]_C:\I_KV \to \I_CW$ is called an \defi{inclusion function} for $f$ if for every $v\in[\ulv,\olv]_K$,
\begin{align*}
    f(v) \in \sfF([\ulv,\olv]_K) = [\ul\sfF([\ulv,\olv]_K),\ol\sfF([\ulv,\olv]_K)]_C.
\end{align*}
$K$ and $C$ are called simplicial cones if they can be identified with the positive orthants $\R^n_+$ and $\R^m_+$ with the right bases, and in that case, $V$ and $W$ are equivalent to $\R^n$ and $\R^m$ with the standard elementwise ordering $\leq$.
We use $\IR^n$ to denote the set of $\R^n_+$-intervals, which are also represented without subscript as $[\cdot,\cdot]$.
This case, often called Interval Analysis, has been very well studied in the literature~\cite{LJ-MK-OD-EW:01} and allows for rapid robust computation of functions under input perturbations, such as floating point inaccuracies and sensor noise.
Inclusion functions can be composed together since their domain and codomain are both intervals---for example, $\sfF\circ\sfG$ is an inclusion function for $f\circ g$ if $\sfF$ and $\sfG$ are inclusion functions for $f$ and $g$.
This compositional approach, yielding what is called the \defi{natural inclusion function}, has been leveraged to build toolboxes for automated construction of inclusion functions, such as \verb|npinterval|~\cite{AH-SJ-SC:23a}.

Consider the nonlinear control system
\begin{align} \label{eq:monotonebanach}
    \dot{v} = f(v,u),
\end{align}
where $v\in V$ is the state, $u\in\calU\subset W$ is the control input in a subset $\calU$ of $W$, and $f$ is a parameterized vector field on $V$. The system~\eqref{eq:monotonebanach} is \defi{monotone} with respect to $(\leq_K,\leq_C)$ if
\begin{align*}
\begin{aligned}
    v_1 \leq_{K} v_2 &\text{ and } \bfu_1(t) \leq_{C} \bfu_2(t) \text{ a.e.} \\
    &\implies \phi_t(v_1, \bfu_1) \leq_K \phi_t(v_2, \bfu_2),
\end{aligned}
\end{align*}
where $\phi_t$ denotes the usual flow of~\eqref{eq:monotonebanach} to time $t$~\cite{DA-EDS:03}.
The defining property of a monotone system allows one to overapproximate its reachable set by simulating two extreme trajectories of the system.
In particular,
\begin{align*}
    v_0\in [\ulv_0,\olv_0]_{K} &\text{ and } \bfu(t) \in[\ul\bfu(t), \ol\bfu(t)]_C \text{ a.e.}\\
    &\implies v(t) \in [\ulv(t), \olv(t)]_{K},
\end{align*}
where $v(t) = \phi_t(v_0,\bfu)$, $\ulv(t) = \phi_t(\ulv_0,\ul\bfu)$, $\olv(t) = \phi_t(\olv_0,\ol\bfu)$. 

Now, consider the case where the dynamics~\eqref{eq:monotonebanach} are not monotone. Assume $K$ and $C$ are simplicial cones, so in the right bases, $V = \R^n$, $W = \R^m$, and $f:\R^n\times \R^m\to \R^n$.
If one can find an inclusion function\footnote{For the embedding system to be monotone, the inclusion function $\sfF$ should be monotone~\cite[p. 29]{LJ-MK-OD-EW:01}. The reachable set estimate holds regardless.} $\sfF=[\ul\sfF,\ol\sfF]:\IR^n\times\IR^m\to\IR^n$ for $f$, one can embed the dynamics~\eqref{eq:monotonebanach} into a \defi{mixed monotone embedding system} in $2n$ dimensions,
\begin{align} \label{eq:mmembsys}
\begin{aligned}
    \dot{\ulv}_i &= \ul\sfE([\ulv,\olv],[\ulu,\olu])_i := \ul\sfF([\ulv,\olv_{i:\ulv}],[\ulu,\olu])_i, \\
    \dot{\olv}_i &= \ol\sfE([\ulv,\olv],[\ulu,\olu])_i := \ol\sfF([\ulv_{i:\olv},\ulv],[\ulu,\olu])_i,
\end{aligned}
\end{align}
for each $i=1,\dots,n$, where $\smallconc{\ulv}{\olv}\in\R^{2n}$, $\ulv\leq\olv$, $\smallconc{\ulu}{\olu}\in\R^{2m}$, $\ulu\leq\olu$. 
The notation $v_{i:w}$ is the vector where $(v_{i:w})_j = v_j$ for every $j\neq i$ and $(v_{i:w})_i = w_i$---thus, $[\ulv,\olv_{i:\ulv}]$ ($[\ulv_{i:\olv},\olv]$ resp.) is the lower (upper resp.) $i$-th face of the axis-aligned hyperrectangle $[\ulv,\olv]$.
The system~\eqref{eq:mmembsys} is a monotone system with respect to the \defi{southeast orders} in $\R^{2n}$ and $\R^{2m}$, defined as $\smallconc{\ulv}{\olv} \leqse \smallconc{\ulw}{\olw}$ if and only if $\ulv \leq \ulw$ and $\olw \leq \olv$.
The trajectory of the embedding system similarly overapproximates the reachable set of the original system,
\begin{align*}
    v_0\in [\ulv_0,\olv_0] &\text{ and } \bfu(t) \in[\ul\bfu(t), \ol\bfu(t)] \text{ a.e.}\\
    &\implies v(t) \in [\ulv(t), \olv(t)],
\end{align*}
where $v(t) = \phi_t(v_0,\bfu)$ and $\smallconc{\ulv(t)}{\olv(t)} = \Phi_t\left(\smallconc{\ulv_0}{\olv_0},\smallconc{\ul\bfu}{\ol\bfu}\right)$, with $\Phi_t$ denoting the flow of~\eqref{eq:mmembsys}~\cite{SC-MA:15},~\cite[Proposition 5]{SJ-AH-SC:23}.

\subsection{Lie groups}

Let $M$ be a (smooth) manifold, $T_p M$ denote the tangent space at $p\in M$ and $TM:= \bigsqcup_{p\in M} T_pM$ denote the tangent bundle.
Given a smooth map $g:M\to N$ between manifolds, denote its differential map as $dg_p:T_pM\to T_{g(p)} N$. 
A vector field $f$ on $M$ is a possibly parameterized smooth section of the tangent bundle $TM$.

A \defi{Lie group} $G$ is a smooth manifold with group structure compatible with the manifold, \emph{i.e.}, the group operation is a smooth mapping from the product manifold $G\times G$ to $G$, and the group inverse is a smooth mapping from $G$ to $G$.
For simplicity, we will assume that $G$ is a \defi{matrix Lie group}, a subgroup of $GL(n)$ for some $n$.
For $g\in G$, define the \defi{left translation map} $L_g:G\to G$ such that $L_g(h) = gh$ for every $h\in G$. 
A Lie group $G$ is immediately associated with its \defi{Lie algebra} $\frakg$, the set of left-invariant vector fields on $G$. A vector field $f$ on $G$ is \defi{left-invariant} if for every $g,h\in G$,
\[f(gh) = (dL_g)_h(f(h)).\]
Every left-invariant vector field $f$ can be identified with a tangent vector $\Theta\in T_eG$ at the identity $e\in G$, as
\[f(g) = (dL_g)_e(\Theta),\]
so, equivalently, $\frakg= T_eG$.
The Lie algebra $\frakg$ is endowed with the \defi{Lie bracket} $\dbrak{\cdot,\cdot}$ from the typical Lie bracket of two vector fields, where we use double square brackets to avoid confusion with intervals.
The \defi{adjoint action} of a Lie algebra element $\Theta_1\in \frakg$ is the mapping $\operatorname{ad}_{\Theta_1}(\Theta_2) = \dbrak{\Theta_1,\Theta_2}$ for every $\Theta_2\in\frakg$.

An element $\Theta\in\frakg$ of the Lie algebra defines a unique one-parameter subgroup $\gamma_{\Theta}:\R\to G$, as the integral curve of the left-invariant vector field associated to $\Theta$ passing through $\gamma_\Theta(0) = e$. Define the \defi{exponential map} $\exp:\frakg\to G$, such that $\exp(\Theta) = \gamma_\Theta(1)$. 
Then, $\exp$ maps a neighborhood of $0\in \frakg$ diffeomorphically to a neighborhood of $e\in G$---denote this neighborhood $N_{\exp} \subset\frakg$. 
In the case of a matrix Lie group, the exponential map coincides with the usual matrix exponential,
\begin{align*}
    \exp(\Theta) = \sum_{k=0}^\infty \frac{\Theta^k}{k!}.
\end{align*}

The group structure allows us to identify every tangent space $T_gG \simeq \frakg$ with the left-translation map through \emph{left-trivialization}. For a vector field $f$ on $G$,
\begin{align*}
    f(g) &= d(L_g \circ L_{g^{-1}})_g (f(g)) \\
    &= (dL_g)_{e} (\underbrace{(dL_{g^{-1}})_g (f(g))}_{=:A(g)}) = gA(g),
\end{align*}
where $A : G\to\frakg$ and the last equality holds for matrix Lie groups. Any vector field is fully defined by such a left-trivialized mapping to the Lie algebra.

A \emph{cone field} on a manifold $M$ is a mapping $\calK$ where for every $p\in M$, $\calK(p)\subset T_pM$ is a cone in the tangent space at $p$.
A cone field $\calK$ on Lie group $G$ is left-invariant if 
\[\calK(gh) = (dL_{g})_{h} (\calK(h)),\]
for every $g,h\in G$.
Similar to how any left-invariant vector field can be identified with a tangent vector at the identity $e$,
every left-invariant cone field can be identified with a cone $K\subset \frakg$ at the tangent space at the identity as
\[
    \calK(g) = (dL_g)_e(K).
\]

\section{CONTROL SYSTEMS ON LIE GROUPS} \label{sec:controlonliegroups}

A control system evolving on a Lie group state space $\calX$ is a nonlinear vector field on $\calX$ parameterized by an input in a vector space. Since the vector field on can be left-trivialized by a mapping to the Lie algebra, without loss of generality, we assume that the control system is a tuple $\Sigma=(\calX,\calU,A)$, of a $n$-dimensional Lie group $\calX$, a subset $\calU\subset W$ of a $m$-dimensional vector space $W$, and a left-trivialized mapping $A:\calX\times\calU\to\mathfrak{x}=T_e\calX$ defining the dynamics,
\begin{align} \label{eq:liecontrolsys}
    \Sigma : \{\ \dot{x} = (dL_x)_e(A(x,u)) = xA(x,u), \ \ x(0) = x_0,
\end{align}
where $t\in[0,\infty)$, $x\in\calX$, and $u\in\calU$. 
Note that $\dot{x} = xA(x,u)$ holds for matrix Lie groups.
Equation~\eqref{eq:liecontrolsys} models a nonlinear control system on a Lie group.
For simplicity, we will assume \emph{forward completeness}, \emph{i.e.}, for a fixed measurable $\bfu:[0,\infty)\to\calU$, the solution to~\eqref{eq:liecontrolsys} is uniquely defined for $t\in[0,\infty)$, and let $\phi_t(x_0, \bfu)$ denote this trajectory from initial condition $x_0\in\calX$.

The Lie group structure allows us to define a locally equivalent system evolving on the Lie algebra $\frakx=T_e\calX$, using the differential of the exponential map.
We follow a similar treatment as~\cite{AI-HMK-SN-AZ:00}, with slight deviations for left-trivialization instead of right-trivialization.

\begin{definition}[Differential of $\exp$] \label{def:dexp} %
    For a Lie group $G$ with Lie algebra $\frakg$ and exponential map $\exp:\frakg\to G$, define the function $\dexp:\frakg\times\frakg\to\frakg$ as the left-trivialized differential of the exponential map, \emph{i.e.},
    \begin{align*}
        d(\exp)_\Theta(\Omega) &= \frac{d}{dt}\Big|_{t=0} \exp(\Theta + t\Omega) \\
        &= (dL_{\exp(\Theta)})_e(\dexp_{\Theta}(\Omega)) \\
        &= \exp(\Theta)\dexp_\Theta(\Omega),
    \end{align*}
    where the final equality holds for matrix Lie groups.
\end{definition}

For $\Theta\in\frakg$, $\dexp_\Theta$ has an analytic expression involving the adjoint action of $\Theta$,
\begin{align} \label{eq:dexp}
\begin{aligned}
    \dexp_{\Theta} &= \frac{1 - \exp(-\operatorname{ad}_\Theta)}{\operatorname{ad}_\Theta} = \sum_{k=0}^\infty \frac{(-1)^k}{(k+1)!}(\operatorname{ad}_\Theta)^k,
\end{aligned}
\end{align}
and its inverse, $\dexp^{-1}_\Theta$, can be obtained by inverting the analytic expression and taking its summation expansion,
\begin{align} \label{eq:dexpinv}
    \dexp^{-1}_{\Theta} = \frac{\operatorname{ad}_\Theta}{1 - \exp(-\operatorname{ad}_\Theta)} = \sum_{k=0}^\infty \frac{B_k}{k!} (\operatorname{ad}_\Theta)^k,
\end{align}
where $B_k$ are the Bernoulli numbers with $B_1 = \frac12$.
$\dexp^{-1}$ allows one to define a vector field on the Lie algebra which exactly characterizes the local behavior of the true dynamics evolving on the original Lie group.
In the literature~\cite{AI-HMK-SN-AZ:00}, this is called canonical coordinates of the first kind---numerical integration is done in a canonical basis for the tangent space arising from the differential of the exponential map, left translated to the centering point $\mathring{x} = x_0\exp(\Theta_0)^{-1}$.

\begin{proposition}[Canonical coordinates] \label{prop:canonicalcoords} %
    Consider the control system~\eqref{eq:liecontrolsys}.
    For small $t\geq 0$, the trajectory $t\mapsto x(t)$ from initial condition $x_0$ under measurable $\bfu:[0,\infty)\to\calU$ is given by 
    \begin{align} \label{eq:liealgsys}
    \begin{gathered}
        x(t) = x_0\exp(\Theta_0)^{-1}\exp(\Theta(t)), \\
        \Upsilon :\left\{
        \begin{aligned}
            \dot{\Theta}(t) &= \dexp_{\Theta(t)}^{-1}(A(x(t),\bfu(t))), \\
            \Theta(0) &= \Theta_0,
        \end{aligned}
        \right.
    \end{gathered}
    \end{align}
    for every $\Theta_0\in N_{\exp}$, $\Theta(t)\in\frakg$ and $\dexp^{-1}$ defined as~\eqref{eq:dexpinv}. 
\end{proposition}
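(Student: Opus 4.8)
The plan is to treat \eqref{eq:liealgsys} as a self-contained initial value problem on the Lie algebra and then argue by direct verification together with uniqueness of solutions. First I would substitute the algebraic relation $x(t) = \mathring{x}\exp(\Theta(t))$, with the fixed group element $\mathring{x} := x_0\exp(\Theta_0)^{-1}$, into the right-hand side of the $\Theta$-dynamics, yielding the closed system $\dot\Theta = \dexp_{\Theta}^{-1}(A(\mathring{x}\exp(\Theta),\bfu(t)))$ with $\Theta(0)=\Theta_0$. Since $\Theta_0\in N_{\exp}$ and the series \eqref{eq:dexpinv} for $\dexp_\Theta^{-1}$ is analytic and well-defined on a neighborhood of $\Theta_0$ (it converges wherever $\operatorname{ad}_\Theta$ has no eigenvalue equal to a nonzero integer multiple of $2\pi i$), the right-hand side is smooth in $\Theta$ and measurable in $t$. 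Carathéodory existence–uniqueness theory then gives a unique absolutely continuous solution $\Theta(t)$ on some interval $[0,\tau)$; this is precisely what the ``small $t$'' restriction encodes, ensuring that $\Theta(t)$ stays in the region where $\dexp^{-1}$ exists and where $\exp$ restricts to a diffeomorphism.

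Next I would show that $x(t) := \mathring{x}\exp(\Theta(t))$ solves \eqref{eq:liecontrolsys}. Differentiating, and using that $\mathring{x}$ is constant,
\[
    \dot x(t) = \mathring{x}\,\frac{d}{dt}\exp(\Theta(t)).
\]
By the definition of $\dexp$, we have $\frac{d}{dt}\exp(\Theta(t)) = \exp(\Theta(t))\dexp_{\Theta(t)}(\dot\Theta(t))$, so
\[
    \dot x(t) = \mathring{x}\exp(\Theta(t))\,\dexp_{\Theta(t)}(\dot\Theta(t)) = x(t)\,\dexp_{\Theta(t)}(\dot\Theta(t)).
\]
Substituting the $\Theta$-dynamics $\dot\Theta(t) = \dexp_{\Theta(t)}^{-1}(A(x(t),\bfu(t)))$ and using $\dexp_{\Theta}\circ\dexp_{\Theta}^{-1} = \operatorname{id}$ collapses this to $\dot x(t) = x(t)A(x(t),\bfu(t))$, which is exactly \eqref{eq:liecontrolsys}. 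The initial condition is immediate: $x(0) = \mathring{x}\exp(\Theta_0) = x_0\exp(\Theta_0)^{-1}\exp(\Theta_0) = x_0$.

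Finally, since $x(\cdot)$ is an absolutely continuous curve satisfying \eqref{eq:liecontrolsys} almost everywhere with $x(0)=x_0$, and solutions of \eqref{eq:liecontrolsys} are unique for fixed measurable $\bfu$ (the left-trivialized field $x\mapsto xA(x,u)$ inherits the local Lipschitz regularity of $A$), it must agree with $\phi_t(x_0,\bfu)$ on $[0,\tau)$, giving the claim. I expect the genuinely delicate point to be the well-posedness bookkeeping rather than the identity itself: one must confirm that the invertibility of $\dexp_{\Theta(t)}$ and the diffeomorphism property of $\exp$ on $N_{\exp}$ persist over the interval on which $\Theta$ is defined, and it is exactly this that forces the qualitative ``small $t$'' hypothesis. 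The algebraic heart of the argument—the cancellation $\dexp\circ\dexp^{-1}=\operatorname{id}$ applied to the left-trivialized derivative of $\exp$—is a one-line chain-rule computation once the definition of $\dexp$ is available.
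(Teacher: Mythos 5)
Your proof is correct, but it runs in the opposite logical direction from the paper's. The paper starts from the group trajectory $x(t)$ (whose existence is already guaranteed by the forward-completeness assumption), implicitly defines $\Theta(t)$ through the relation $x(t) = x_0\exp(\Theta_0)^{-1}\exp(\Theta(t))$ --- which is possible for small $t$ because $\exp$ is a diffeomorphism near $\Theta_0\in N_{\exp}$ --- and then differentiates this relation and inverts $\dexp_{\Theta(t)}$ to \emph{derive} the ODE $\Upsilon$. You instead take $\Upsilon$ as a self-contained Carath\'eodory initial value problem on $\frakg$, establish local existence and uniqueness of $\Theta(\cdot)$, verify by the same $\dexp_\Theta\circ\dexp_\Theta^{-1}=\operatorname{id}$ computation that $\mathring{x}\exp(\Theta(t))$ satisfies~\eqref{eq:liecontrolsys}, and then invoke uniqueness of solutions of~\eqref{eq:liecontrolsys} to identify it with $\phi_t(x_0,\bfu)$. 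The algebraic core --- differentiating $\exp(\Theta(t))$ via the left-trivialized tangent --- is identical in both arguments. What your route buys is an explicit well-posedness statement for the Lie algebra system itself, which the paper leaves implicit; what it costs is the extra appeal to uniqueness for the group dynamics (requiring local Lipschitz regularity of $A$, which the paper never states but tacitly assumes). Both arguments need the same bookkeeping that $\Theta(t)$ remains where $\dexp^{-1}_{\Theta(t)}$ is defined and $\exp$ is injective, and you correctly identify this as the content of the ``small $t$'' hypothesis.
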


The proof for Proposition~\ref{prop:canonicalcoords} is a small variation on~\cite[Lemma 3.1]{AI-HMK-SN-AZ:00} for Lie groups, which follows by time differentiation and uniqueness of $x(t)$; alternatively~\cite[Corollary 1]{HMK:98} applies in the more general setting of homogeneous spaces.

Proposition~\ref{prop:canonicalcoords} provides a natural approach to adapt tools for numerical integration on vector spaces to Lie groups. 
If one were to apply traditional integration techniques to the nonlinear space $\calX$, thought as a submanifold/subgroup of a known vector space, \emph{e.g.}, $\bbR^{n\times n}$, the numerical approximation error will quickly accumulate causing the solution to drift away from the manifold.
Instead, one can apply standard integration techniques to the system~\eqref{eq:liealgsys} evolving on the linear space $\frakg$, with a small enough step size, and the exponential map guarantees that the solution will remain on the manifold.
These types of geometric numerical integrators are explored in depth in~\cite{AI-HMK-SN-AZ:00}, and we utilize the Runge-Kutta-Munthe-Kaas integration scheme~\cite{HMK:98} in Algorithm~\ref{alg:RKMK-Reach} below.

\section{REACHABILITY VIA LIE ALGEBRA} \label{sec:monotoneliealgebra}

Proposition~\ref{prop:canonicalcoords} established a local equivalence between the control system~\eqref{eq:liecontrolsys} and the system~\eqref{eq:liealgsys}, allowing us to improve simulation capabilities using the linear Lie algebra.
In this section, we apply (mixed) monotone reachable set computations in the equivalent system~\eqref{eq:liealgsys} to obtain overapproximations for the original Lie group control system~\eqref{eq:liecontrolsys}.

\subsection{The tangent interval}

We propose to study the following object living in the tangent space to the manifold.

\begin{definition}[Tangent interval] \label{def:tangentinterval}
    Given a cone field $\calK$ on a smooth manifold $M$, a point $p\in M$, and two vectors $\ulv_p,\olv_p\in T_pM$ such that $\ulv_p \leq_{\calK(p)} \olv_p$, let
    \begin{align*}
        [\ulv_p,\olv_p]_{\calK(p)} := \{v_p\in T_pM : \ulv_p \leq_{\calK(p)} v_p \leq_{\calK(p)} \olv_p\},
    \end{align*}
    denote a \emph{tangent interval}.
\end{definition}

On its own, the tangent interval is not directly useful as it has no relation to the original manifold. However, when coupled with additional structure on the manifold, like the Lie group structure, this can become a useful object to study.

Suppose $G$ is a Lie group with a left-invariant cone field $\calK$ identified by the cone $K\subset\frakg$. One can left-trivialize a tangent interval $[\ulv_g,\olv_g]_{\calK(g)}$ by identifying the tangent space $T_gG$ with $\frakg$ in the usual way,
\begin{align*}
    [\ulv_g,\olv_g]_{\calK(g)} &= d(L_g \circ L_{g^{-1}})_g ([\ulv_g,\olv_g]_{\calK(g)}), \\
    &= (dL_g)_e (\underbrace{(dL_{g^{-1}})_g ([\ulv_g,\olv_g]_{\calK(g)})}_{=:[\ul\Theta,\ol\Theta]_K}) = g[\ul\Theta,\ol\Theta]_K,
\end{align*}
with $\ul\Theta,\ol\Theta\in\frakg$, $\ul\Theta \leq_K \ol\Theta$,
which follows since the cone field $\calK(g)$ is left-invariant. 
The last equality, read as the set image $\{g\Theta : \Theta\in[\ul\Theta,\ol\Theta]_K\}$, holds for matrix Lie groups.
With the exponential map, a tangent interval represents a real set on the Lie group, namely
\begin{align*}
    g\exp([\ul\Theta,\ol\Theta]_K) = \{g\exp(\Theta) : \ul\Theta \leq_K \Theta \leq_K \ol\Theta\},
\end{align*}
which we call the \emph{exponentiated tangent interval}.

\subsection{Monotone Lie algebra dynamics}

Equipped with the tangent interval, traditional monotone systems theory can be applied to the equivalent system~\eqref{eq:liealgsys}, evolving in the vector space Lie algebra.

\begin{theorem}[Monotone Lie algebra] \label{thm:smalltreach}
    Consider the control system~\eqref{eq:liecontrolsys}, and the associated system~\eqref{eq:liealgsys} in the Lie algebra. Let $\calK$ be a left-invariant cone field on $\calX$ identified by $K\subset\frakx$, and let $C$ be a cone on $W$. 
    Let $\mathring{x}\in\calX$.
    If $\Upsilon$ from~\eqref{eq:liealgsys} is monotone with respect to $(\leq_{K},\leq_{C})$, then for small $t\geq 0$,
    \begin{align*}
        x_0\in\mathring{x} \exp([\ul\Theta_0,\ol\Theta_0]_{K}) &\text{ and } \bfu(t)\in[\ul\bfu(t),\ol\bfu(t)]_C \text{ a.e.} \\
        & \implies x(t)\in \mathring{x}\exp([\ul\Theta(t),\ol\Theta(t)]_{K}),
    \end{align*}
    where $x(t) = \phi^\Sigma_t(x_0,\bfu)$, $\ul\Theta(t) = \phi^\Upsilon_t(\ul\Theta_0,\ul\bfu)$, $\ol\Theta(t) = \phi^\Upsilon_t(\ol\Theta_0,\ol\bfu)$.
\end{theorem}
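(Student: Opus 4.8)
The plan is to lift the entire statement to the Lie algebra via Proposition~\ref{prop:canonicalcoords} and then invoke the definition of monotonicity of $\Upsilon$ directly. The essential observation is that, because the center $\mathring{x}$ is held fixed, every initial condition $x_0$ in the exponentiated tangent interval $\mathring{x}\exp([\ul\Theta_0,\ol\Theta_0]_{K})$ corresponds, via $x_0 = \mathring{x}\exp(\Theta_0)$, to a single initial condition $\Theta_0\in[\ul\Theta_0,\ol\Theta_0]_{K}$ of one and the same control system $\Upsilon$ on $\frakx$. Reachability in the group thus becomes ordinary monotone-systems reachability in the vector space $\frakx$, and exponentiation transports the conclusion back to $\calX$.

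First I would fix $x_0\in\mathring{x}\exp([\ul\Theta_0,\ol\Theta_0]_{K})$ and extract the corresponding $\Theta_0\in[\ul\Theta_0,\ol\Theta_0]_{K}$ with $x_0=\mathring{x}\exp(\Theta_0)$; since $\exp$ is a diffeomorphism on $N_{\exp}$, this $\Theta_0$ is well defined whenever the tangent interval lies in $N_{\exp}$. I would then verify that the center produced by Proposition~\ref{prop:canonicalcoords}, namely $x_0\exp(\Theta_0)^{-1}$, equals the prescribed $\mathring{x}$, so that applying the proposition gives $x(t)=\phi^\Sigma_t(x_0)=\mathring{x}\exp(\Theta(t))$ with $\Theta(t)=\phi^\Upsilon_t(\Theta_0,\bfu)$ for small $t$. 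The key point is that $\Upsilon$, with this fixed center $\mathring{x}$ baked in, is the \emph{same} system regardless of which $x_0$ in the interval was chosen.

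Next I would translate the hypotheses into the order structures: $x_0\in\mathring{x}\exp([\ul\Theta_0,\ol\Theta_0]_{K})$ yields $\ul\Theta_0\leq_{K}\Theta_0\leq_{K}\ol\Theta_0$, while $\bfu(t)\in[\ul\bfu(t),\ol\bfu(t)]$ a.e.\ yields $\ul\bfu(t)\leq_{C}\bfu(t)\leq_{C}\ol\bfu(t)$ a.e. Applying the defining monotonicity property of $\Upsilon$ twice---once comparing $\ul\Theta$ against $\Theta$ and once comparing $\Theta$ against $\ol\Theta$---gives $\ul\Theta(t)\leq_{K}\Theta(t)\leq_{K}\ol\Theta(t)$, i.e.\ $\Theta(t)\in[\ul\Theta(t),\ol\Theta(t)]_{K}$. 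Pushing this through $\mathring{x}\exp(\cdot)$ then yields $x(t)\in\mathring{x}\exp([\ul\Theta(t),\ol\Theta(t)]_{K})$, which is the claim.

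The main obstacle I anticipate is making ``for small $t$'' precise and uniform over the interval. Proposition~\ref{prop:canonicalcoords} guarantees the canonical-coordinate representation only up to a time $T>0$ keeping $\Theta(t)$ inside $N_{\exp}$, and a priori this $T$ depends on $\Theta_0$ and $\bfu$. To conclude for all $x_0$ in the interval simultaneously, I would need one existence time valid for every trajectory lying between the extremes $\ul\Theta(t)$ and $\ol\Theta(t)$. Here monotonicity is itself the tool: it confines $\Theta(t)$ to the order interval $[\ul\Theta(t),\ol\Theta(t)]_{K}$, so it suffices to choose $T$ small enough that this sandwiched set stays within $N_{\exp}$. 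Verifying that the two extreme trajectories remain in $N_{\exp}$ on a common interval, and that the order interval between them does too, is the delicate bookkeeping underlying the ``small $t$'' qualifier.
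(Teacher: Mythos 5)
Your proof follows essentially the same route as the paper's: extract $\Theta_0\in[\ul\Theta_0,\ol\Theta_0]_K$ with $x_0=\mathring{x}\exp(\Theta_0)$, invoke monotonicity of $\Upsilon$ to sandwich $\Theta(t)$ between the two extremal trajectories, and transport the result back to the group via Proposition~\ref{prop:canonicalcoords}. Your closing discussion of making the existence time uniform over the whole initial interval is in fact more careful than the paper's proof, which leaves the ``small $t$'' qualifier implicit.
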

\begin{proof}
    Let $x_0\in\mathring{x}\exp([\ul\Theta_0,\ol\Theta_0]_{K})$, which implies the existence of $\Theta_0\in[\ul\Theta_0,\ol\Theta_0]_K$ such that $x_0 = \mathring{x}\exp(\Theta_0)$.
    In particular, this implies that $\ul\Theta_0\leq_K\Theta_0$ and $\Theta_0\leq_K\ol\Theta_0$.
    By hypothesis,~\eqref{eq:liealgsys} is monotone, thus,
    \begin{align} \label{proof:Thetabound}
        \ul\Theta(t) \leq_K \Theta(t) \leq_K \ol\Theta(t),
    \end{align}
    where $\Theta(t) = \phi_t^\Upsilon(\Theta_0,\bfu)$. Using Proposition~\ref{prop:canonicalcoords}, for small $t\geq 0$,
    \begin{align*}
        x(t) =  x_0\exp(\Theta_0)^{-1} \exp(\Theta(t)) = \mathring{x} \exp(\Theta(t)),
    \end{align*}
    since $x_0 = \mathring{x}\exp(\Theta_0)$, completing the proof.
\end{proof}

Theorem~\ref{thm:smalltreach} allows us to simulate two trajectories in the Lie algebra to bound the reachable set on the original manifold for small $t>0$, provided $\Upsilon$ from~\eqref{eq:liealgsys} is monotone.

\subsection{Nonmonotone Lie algebra dynamics}
In the case that the control system~\eqref{eq:liealgsys} is not monotone, when $K$ and $C$ are simplicial cones, one can instead build a mixed monotone embedding system.
Recall that in the case of simplicial cones, the cone $K$ can be identified with $\R^n_+$, and the vector space $\frakg$ can be identified with $\R^n$. 
Let $\hat{\cdot}:\R^n\to\frakg$ denote this identification, and let $\cdot^\vee:\frakg\to\R^n$ denote the inverse map. 
Under these identifications, \emph{i.e.}, $\Theta = \hat{v}$, the system~\eqref{eq:liealgsys} can be written as follows,
\begin{align} \label{eq:euclididentify}
\begin{aligned}
    \dot{v}(t) &= \dot{\Theta}(t)^\vee \\ &= \underbrace{\dexp_{\hat{v}}^{-1}(A(x_0\exp(\hat{v}_0)^{-1}\exp(\hat{v}),u))^\vee}_{=:f(v,u)},
\end{aligned}
\end{align}
where $v\in\R^n$, $u\in\R^m$, and $f:\R^n\times\R^m\to\R^n$.\footnote{For notational brevity, we simply set $W=\R^m$ and $C=\R^m_+$ to avoid tracking explicit basis mappings for the control input.}
With an inclusion function $\sfF:\IR^n\times\IR^m\to\IR^n$ for $f$, the embedding system~\eqref{eq:mmembsys} can be used to obtain interval bounds on the reachable set of the Lie algebra system~\eqref{eq:liealgsys}.

\begin{theorem}[Embedded Lie algebra] \label{thm:embeddedLiealg}
Consider the control system~\eqref{eq:liecontrolsys}, and the associated system~\eqref{eq:liealgsys} in the Lie algebra. Let $\calK$ be a left-invariant cone field on $\calX$ identified by simplicial cone $K\subset\frakx$, and let $C$ be a simplicial cone on $W$.
Let $\mathring{x}\in\calX$.
If $\sfE$~\eqref{eq:mmembsys} is the embedding system induced by an inclusion function $\sfF$ on~\eqref{eq:euclididentify}, then for small $t\geq 0$,
    \begin{align*}
        x_0\in\mathring{x} \exp([\ul\Theta_0,\ol\Theta_0]_{K}) &\text{ and } \bfu(t)\in[\ul\bfu(t),\ol\bfu(t)] \text{ a.e.} \\
        & \implies x(t)\in \mathring{x}\exp([\ul\Theta(t),\ol\Theta(t)]_{K}),
    \end{align*}
    where $x(t) = \phi^\Sigma_t(x_0,\bfu)$, $t\mapsto\smallconc{\ulv(t)}{\olv(t)}$ is the trajectory $\Phi_t\Big(\smallconc{\ul\Theta_0^\vee}{\ol\Theta_0^\vee},\smallconc{\ul\bfu}{\ol\bfu}\Big)$ of the embedding system~\eqref{eq:mmembsys}, and $\ul\Theta(t) = \hat{\ulv}(t)$, $\ol\Theta(t) = \hat{\olv}(t)$.
\end{theorem}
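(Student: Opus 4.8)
The plan is to mirror the proof of Theorem~\ref{thm:smalltreach}, with the only change being that the direct appeal to monotonicity of $\Upsilon$ is replaced by the reachability guarantee of the mixed monotone embedding system~\eqref{eq:mmembsys} recalled in Section~\ref{sec:background}. First I would unpack the hypothesis $x_0\in\mathring{x}\exp([\ul\Theta_0,\ol\Theta_0]_{K})$: this produces some $\Theta_0\in[\ul\Theta_0,\ol\Theta_0]_{K}$ with $x_0 = \mathring{x}\exp(\Theta_0)$, where $\mathring{x} = x_0\exp(\Theta_0)^{-1}$ as in Proposition~\ref{prop:canonicalcoords}. Because $K$ is simplicial, the identification $\cdot^\vee:\frakx\to\R^n$ carries $K$ to $\R^n_+$, so setting $v_0 = \Theta_0^\vee$ turns the $K$-interval membership into the standard elementwise statement $v_0\in[\ulv_0,\olv_0]$, with $\ulv_0 = \ul\Theta_0^\vee$ and $\olv_0 = \ol\Theta_0^\vee$.

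Next I would invoke Proposition~\ref{prop:canonicalcoords} through the Euclidean rewriting~\eqref{eq:euclididentify}: for small $t$, the trajectory of $\dot{v} = f(v,u)$ from $v_0$ under $\bfu$ satisfies $\Theta(t) = \hat{v}(t)$ and $x(t) = x_0\exp(\Theta_0)^{-1}\exp(\Theta(t)) = \mathring{x}\exp(\hat{v}(t))$. This is exactly the reduction used in Theorem~\ref{thm:smalltreach}, and it is valid as long as $\hat{v}(t)$ remains in the diffeomorphism neighborhood $N_{\exp}$, which holds on a short enough horizon.

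Then I would apply the embedding-system reachability property stated in Section~\ref{sec:background}: since $f$ admits the inclusion function $\sfF$ generating the monotone (southeast-ordered) embedding system $\sfE$, and since $v_0\in[\ulv_0,\olv_0]$ and $\bfu(t)\in[\ul\bfu(t),\ol\bfu(t)]$ a.e., the embedding trajectory $t\mapsto\smallconc{\ulv(t)}{\olv(t)} = \Phi_t\!\left(\smallconc{\ulv_0}{\olv_0},\smallconc{\ul\bfu}{\ol\bfu}\right)$ sandwiches the true trajectory, $v(t)\in[\ulv(t),\olv(t)]$. Translating back through $\hat{\cdot}$ and using that $K$ is the image of $\R^n_+$, this elementwise containment is precisely $\Theta(t) = \hat{v}(t)\in[\ul\Theta(t),\ol\Theta(t)]_{K}$ with $\ul\Theta(t)=\hat{\ulv}(t)$ and $\ol\Theta(t)=\hat{\olv}(t)$. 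Composing with $x(t) = \mathring{x}\exp(\Theta(t))$ from the previous step then yields $x(t)\in\mathring{x}\exp([\ul\Theta(t),\ol\Theta(t)]_{K})$, as claimed.

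The argument is largely bookkeeping once these two ingredients are assembled, so I do not expect a deep obstacle; the point needing care is the \emph{small $t$} qualifier. The exponentiated-tangent-interval representation and the canonical-coordinate equivalence are only valid while $\hat{v}(t)$ stays inside $N_{\exp}$, so I would choose the $t$-window to simultaneously keep the embedding trajectory $\smallconc{\ulv(t)}{\olv(t)}$ well-defined and keep every realizable $\Theta(t)$ in $N_{\exp}$. The only other subtlety, already flagged in the footnote to~\eqref{eq:mmembsys}, is that $\sfF$ must be a \emph{monotone} inclusion function so that $\sfE$ is monotone with respect to $\leqse$ and the sandwiching $v(t)\in[\ulv(t),\olv(t)]$ indeed holds.
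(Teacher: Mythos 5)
Your proposal is correct and follows essentially the same route as the paper: the paper's proof simply says it is identical to that of Theorem~\ref{thm:smalltreach} except that the bound $\ul\Theta(t)\leq_K\Theta(t)\leq_K\ol\Theta(t)$ is now obtained from the embedding-system sandwiching $\ulv(t)\leq v(t)\leq\olv(t)$ (via the simplicial identification $\cdot^\vee$), which is precisely the substitution you make. Your added remarks on keeping $\hat v(t)$ inside $N_{\exp}$ and on requiring $\sfF$ to be a monotone inclusion function are consistent with the paper's own caveats and do not change the argument.
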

\begin{proof}
    The proof is identical to that of Theorem~\ref{thm:smalltreach}, with a slight modification to obtain the bound~\eqref{proof:Thetabound}.
    Since \eqref{eq:mmembsys} is an embedding system for~\eqref{eq:euclididentify}, 
    \begin{align*}
        \ulv(t) \leq v(t) \leq \olv(t),
    \end{align*}
    by~\cite[Proposition 5]{SJ-AH-SC:23}, which implies that 
    \begin{align*}
        \ul\Theta(t) \leq_K \Theta(t) \leq_K \ol\Theta(t),
    \end{align*}
    where $\Theta(t) = \phi_t^\Upsilon(\Theta_0, \bfu)$ and $v(t) = \Theta(t)^\vee$. 
\end{proof}

In practice, obtaining the inclusion function $\sfF$ is possibly automatable using interval analysis toolboxes like \verb|npinterval|~\cite{AH-SJ-SC:23a}; however, these may involve matrix exponentials, infinite summations, and nonlinearities.

Theorems~\ref{thm:smalltreach} and~\ref{thm:embeddedLiealg} are only valid as long as the interval $[\ul\Theta,\ol\Theta]_K$ remains inside the neighborhood $N_{\exp}$, which cannot be guaranteed for any desired time $T>0$.

\subsection{Recentering via Baker-Campbell-Hausdorff (BCH)}

To extend Theorems~\ref{thm:smalltreach} and~\ref{thm:embeddedLiealg} to arbitrary time horizons, we use an interval overapproximation of the BCH formula to ``recenter'' the tangent interval to ensure it remains in $N_{\exp}$.
Let $G$ be a Lie group, with Lie algebra $\frakg = T_eG$ and exponential map $\exp:\frakg\to G$. Given $\Theta_1,\Theta_2\in\frakg$, the BCH formula solves for $\Theta_3$ satisfying
\begin{align} \label{eq:bchproblem}
    \exp(\Theta_1)\exp(\Theta_2) = \exp(\Theta_3).
\end{align}
The non-commutativity of $G$ implies that the solution will generally not be $\Theta_3 = \Theta_1 + \Theta_2$.  Indeed, the BCH formula famously provides the solution as
an infinite sum of commutators in the Lie algebra,
\begin{align}\label{eq:bch}
    \Theta_3 = &\operatorname{bch}_{\Theta_1}(\Theta_2) = \Theta_1 + \Theta_2 + \frac12\dbrak{\Theta_1,\Theta_2} \\
    &+ \frac1{12} \dbrak{\Theta_1,\dbrak{\Theta_1,\Theta_2}} - \frac1{12} \dbrak{\Theta_2,\dbrak{\Theta_1,\Theta_2}} + \cdots. \nonumber
\end{align}
For some Lie group/Lie algebra pairs, the BCH formula may admit a simpler closed-form solution, as is the case for, \emph{e.g.}, the Lie group $SO(3)$~\cite[Appendix B]{AI-HMK-SN-AZ:00}. 

Now, given a cone $K\subset\frakg$ inducing the partial order $\leq_K$, we assume availability of an inclusion function $\textsf{BCH}_{\Theta_1}:\I_K\frakg\to\I_K\frakg$ for any $\Theta_1\in\frakg$, such that for every $\Theta_2\in[\ul\Theta_2,\ol\Theta_2]_K\in\I_K\frakg$,
\begin{align}\label{eq:BCHincl}
    \operatorname{bch}_{\Theta_1}(\Theta_2) \in \textsf{BCH}_{\Theta_1}([\ul\Theta_2,\ol\Theta_2]_K).
\end{align}
In practice, this may be obtained in closed-form, or approximated using a sufficient truncation of~\eqref{eq:bch} and off the shelf interval analysis techniques.

\begin{proposition}[Recentering via $\textsf{BCH}$] \label{prop:recenterviabch}
    Let $G$ be a Lie group, with Lie algebra $\frakg = T_eG$. For every $\mathring{g}\in G$, $[\ul\Theta,\ol\Theta]_K\in\I_K\frakg$, and $\Theta\in\frakg$,
    \begin{align*}
        \mathring{g} \exp([\ul\Theta,\ol\Theta]_K) \subset \mathring{g}\exp(\Theta)\exp(\textsf{BCH}_{-\Theta}([\ul\Theta,\ol\Theta]_K)).
    \end{align*}
\end{proposition}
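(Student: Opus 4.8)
The statement is a set containment, so the plan is to verify it pointwise: I would fix an arbitrary element of the left-hand set and exhibit it as a member of the right-hand set. Concretely, take any $\Psi\in[\ul\Theta,\ol\Theta]_K$, so that $\mathring{g}\exp(\Psi)$ is a generic point of $\mathring{g}\exp([\ul\Theta,\ol\Theta]_K)$. The goal is to produce some $\Phi$ lying in the interval $\textsf{BCH}_{-\Theta}([\ul\Theta,\ol\Theta]_K)$ for which $\mathring{g}\exp(\Psi)=\mathring{g}\exp(\Theta)\exp(\Phi)$, since every point of the right-hand set has exactly this form.

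First I would cancel the common invertible left factor $\mathring{g}$, reducing the goal to the group identity $\exp(\Psi)=\exp(\Theta)\exp(\Phi)$. Left-multiplying by $\exp(\Theta)^{-1}=\exp(-\Theta)$ isolates $\exp(\Phi)=\exp(-\Theta)\exp(\Psi)$, so it remains to show that this product of two exponentials can be written as $\exp(\Phi)$ with $\Phi$ in the claimed interval.

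Next I would apply the BCH formula \eqref{eq:bchproblem}, with the fixed element $-\Theta$ serving as the recentering argument and $\Psi$ as the free argument; this furnishes the single Lie-algebra element $\Phi=\operatorname{bch}_{-\Theta}(\Psi)$ solving $\exp(\Theta)\exp(\Phi)=\exp(\Psi)$. Because $\Psi\in[\ul\Theta,\ol\Theta]_K$, the defining inclusion property \eqref{eq:BCHincl} of $\textsf{BCH}_{-\Theta}$ then yields $\operatorname{bch}_{-\Theta}(\Psi)\in\textsf{BCH}_{-\Theta}([\ul\Theta,\ol\Theta]_K)$. Taking $\Phi=\operatorname{bch}_{-\Theta}(\Psi)$ gives $\mathring{g}\exp(\Psi)=\mathring{g}\exp(\Theta)\exp(\Phi)$ with $\Phi$ in the desired interval, which is membership in the right-hand set; as $\Psi$ was arbitrary, the containment follows.

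The group cancellation and the appeal to the inclusion property are routine. The delicate point, and the step I would check most carefully, is the bookkeeping of the BCH formula: one must ensure that the factor peeled off on the left is $\exp(-\Theta)$ and that the fixed argument $-\Theta$ and the varying argument $\Psi$ are slotted into the subscript and argument roles exactly as in the inclusion property \eqref{eq:BCHincl}, so that the enclosure $\textsf{BCH}_{-\Theta}$ is the correct one. A secondary caveat is that the commutator series \eqref{eq:bch} converges only for $\Theta,\Psi$ in a neighborhood of $0$; this domain-of-validity concern is exactly what assuming the inclusion function $\textsf{BCH}_{-\Theta}$ abstracts away, so granting \eqref{eq:BCHincl} removes any need for a separate convergence argument.
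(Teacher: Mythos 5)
Your proposal is correct and is essentially the paper's own proof: fix $\Theta'\in[\ul\Theta,\ol\Theta]_K$, insert $\exp(\Theta)\exp(-\Theta)$, rewrite $\exp(-\Theta)\exp(\Theta')$ as $\exp(\operatorname{bch}_{-\Theta}(\Theta'))$, and invoke the inclusion property~\eqref{eq:BCHincl}. The ``delicate'' slotting you flag is handled exactly as in the paper's proof (fixed left factor $-\Theta$ in the subscript, varying $\Theta'$ as the argument)---note this reading is the one the paper actually uses, even though the literal statement of~\eqref{eq:bchproblem}--\eqref{eq:bch} places the subscript on the \emph{right} factor.
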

\begin{proof}
    Let $\Theta'\in[\ul\Theta,\ol\Theta]_K$. Since $\exp(\Theta)^{-1} = \exp(-\Theta)$,
    \begin{align*}
        \mathring{g}\exp(\Theta') 
        &= \mathring{g} \exp(\Theta) \exp(-\Theta) \exp(\Theta') \\
        &= \mathring{g} \exp(\Theta) \exp(\operatorname{bch}_{-\Theta}(\Theta')) \\
        &\in \mathring{g} \exp(\Theta) \exp(\textsf{BCH}_{-\Theta}([\ul\Theta,\ol\Theta]_K)),
    \end{align*}
    since $\textsf{BCH}_{-\Theta}$ is an inclusion function for $\operatorname{bch}_{-\Theta}$.
\end{proof}

\begin{figure}
    \centering
    \includegraphics[width=\columnwidth]{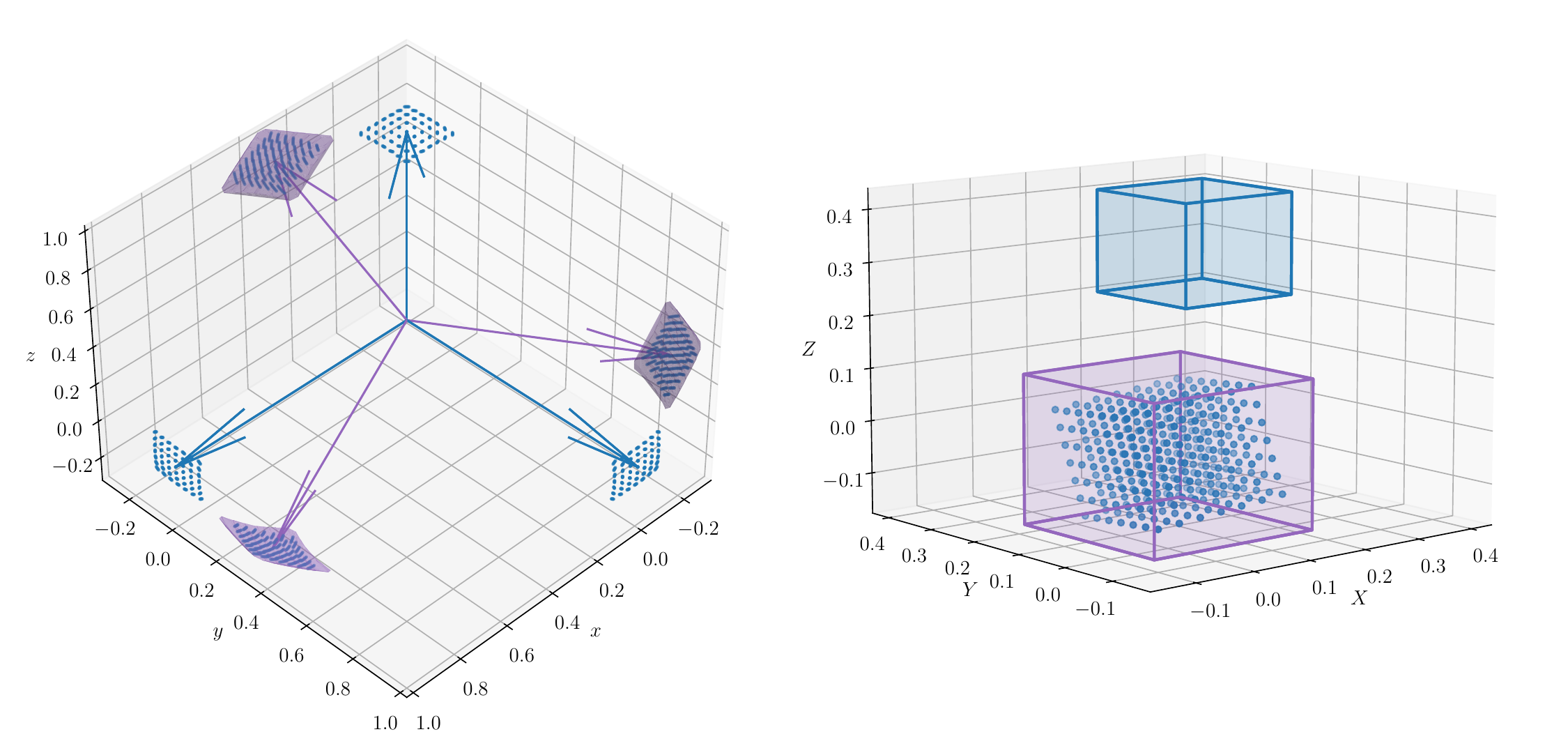}
    \caption{A pictoral representation of recentering using a $\textsf{BCH}$ inclusion function~\eqref{eq:BCHincl} for the Lie group $SO(3)$, with Lie algebra $\so(3)$ and the simplicial cone $K$ from Section~\ref{subsec:exSO3}. 
    \textbf{Left ($SO(3)$):}
    The exponentiated tangent intervals $I\exp([-0.1,0.1]^3)$ and $I\exp([0.2,0.4]^3)$ centered around the identity matrix are pictured in blue, using an evenly spaced meshgrid of $7^3=343$ points exponentiated from the Lie algebra.
    With $\mathring{\Theta} = [0.3\ 0.3\ 0.3]^T$,
    the exponentiated tangent interval $\exp(\mathring{\Theta}) \exp(\textsf{BCH}_{-\mathring{\Theta}}([0.2,0.4]^3))$ centered around $\exp(\mathring{\Theta})$ is pictured in purple.
    Visually, the geometry of the tangent interval centered at the identity before the BCH overapproximation is different than the geometry of the tangent interval centered at $\exp(\mathring{\Theta})$.
    \textbf{Right ($\so(3)$):} The box $[0.2,0.4]^3$ and the map $\Theta \mapsto \operatorname{bch}_{-\mathring{\Theta}}(\Theta)$ sampled with $7^3=343$ evenly spaced points in this box are pictured in blue. The box $\textsf{BCH}_{-\mathring{\Theta}}([0.2,0.4]^3)\approx[-1.4,1.4]^3$ from the BCH inclusion function is pictured in purple, overapproximating these outputted points.
    The exponential map links the left and right plots.
    }
    \label{fig:SO3compare}
\end{figure}

In Figure~\ref{fig:SO3compare}, we demonstrate this recentering procedure in $SO(3)$.
The inclusion function $\textsf{BCH}$ can be used to patch together valid reachable sets obtained using Theorems~\ref{thm:smalltreach} and~\ref{thm:embeddedLiealg}, to extend the reachable set to any arbitrary time $T > 0$. 
For example, a simple strategy would be to iteratively build reachable sets in the following manner. As input, take a left-trivialized exponentiated tangent interval $\mathring{x}_t\exp([\ul\Theta_t,\ol\Theta_t]_K)$.
\begin{enumerate}[(i)]
    \item Simulate the lower and upper bounds of the monotone system or the embedding system to $t + \Delta t$, obtaining $\ul\Theta_{t + \Delta t}$ and $\ol\Theta_{t + \Delta t}$. The reachable set at $t + \Delta t$ is $\mathring{x}_t \exp([\ul\Theta_{t+\Delta t},\ol\Theta_{t + \Delta t}]_K)$.
    \item Select the midpoint $\mathring{\Theta} = \frac{\ul\Theta_{t + \Delta t} + \ol\Theta_{t + \Delta t}}{2}$, and its associated Lie group element $\mathring{x}_{t + \Delta t} = \mathring{x}_t\exp(\mathring{\Theta} )$ for the next canonical centering.
    \item Use Proposition~\ref{prop:recenterviabch} to obtain an overapproximation $\mathring{x}_{t + \Delta t}\exp(\textsf{BCH}_{-\mathring{\Theta}}([\ul\Theta_{t+\Delta t},\ol\Theta_{t + \Delta t}]_K))$.
\end{enumerate}
In Algorithm~\ref{alg:RKMK-Reach}, we implement this iterative scheme numerically, using the Runge-Kutta method in the Lie algebra~\cite[Ch 3]{AI-HMK-SN-AZ:00}, \cite{HMK:98}. 
We refer to~\cite{JCB:64} for an in-depth discussion about Butcher tableaus.
As long as each $[\ul\Theta_n,\ol\Theta_n]_K\subset N_{\exp}$, Theorem~\ref{thm:smalltreach}, Theorem~\ref{thm:embeddedLiealg}, and Proposition~\ref{prop:recenterviabch} guarantee that Algorithm~\ref{alg:RKMK-Reach} returns a valid overapproximated reachable set on the Lie group, up to small numerical inaccuracies.

\RestyleAlgo{ruled}
\SetKwInput{KwInput}{Input}
\SetKwInput{KwParam}{Param}
\SetKwInput{KwAssum}{Assumptions}
\begin{algorithm}[t]
\caption{Runge-Kutta-Munthe-Kaas Reachability in the Lie Algebra}
\label{alg:RKMK-Reach}
\KwInput{Initial set $\mathring{x}_0\exp([\ul\Theta_0,\ol\Theta_0]_{K})$; control bounds $\ul\bfu,\ol\bfu:[0,\infty)\to\calU$, left-trivialized dynamics $A:\calX\times\calU\to\frakx$; time step $h>0$; horizon $N$}
\KwParam{Butcher tableau $\{a_{kl},b_l,c_k\}_{1\leq k,l \leq \nu}$}
\SetInd{0em}{0.8em}
\For{$n=0,\dots,N-1$}{
    $\ul\Omega_0 = \ul\Theta_{n}$; $\ol\Omega_0 = \ol\Theta_{n}$\;
    \For{$k=1,\dots,\nu$}{
        $\ul\Omega_k = \ul\Omega_0 + \sum_{l=1}^\nu a_{kl} \ulF_l$;
        $\ol\Omega_k = \ol\Omega_0 + \sum_{l=1}^\nu a_{kl} \olF_l$\;
        \If{$\Upsilon$~\eqref{eq:liealgsys} is monotone}{
            Use Theorem~\ref{thm:smalltreach}:\\
            $\ulF_k = \dexp^{-1}_{\ul\Omega_k}(hA(\mathring{x}_n\exp(\ul\Omega_k),\ul\bfu(t_n+c_kh)))$\;
            $\olF_k = \dexp^{-1}_{\ol\Omega_k}(hA(\mathring{x}_n\exp(\ol\Omega_k),\ol\bfu(t_n+c_kh)))$\;
        } \ElseIf {$K$, $C$ simplicial} {
            Use Theorem~\ref{thm:embeddedLiealg}:\\
            \vspace{0.25em}
            $\smallconc{\ulF_k^\vee}{\olF_k^\vee} = h\sfE\left(\smallconc{\ul\Omega_k^\vee}{\ol\Omega_k^\vee}, \smallconc{\ul\bfu(t_n+c_kh)}{\ol\bfu(t_n+c_kh)}\right)$\;
        }
    }
    $\ul\Omega = \sum_{l=1}^\nu b_{l} \ulF_l$;
    $\ol\Omega = \sum_{l=1}^\nu b_{l} \olF_l$\;

    \uIf{\textsf{BCH} recentering condition} {
        $\Omega =  \frac{\ul\Omega + \ol\Omega}{2}$;
        $\mathring{x}_{n+1}\gets \mathring{x}_{n}\exp\left(\Omega\right)$\;
        $[\ul\Theta_{n+1}, \ol\Theta_{n+1}]_K \gets \textsf{BCH}_{-\Omega}\left([\ul\Omega, \ol\Omega]_K\right)$\;
    } \Else {
        $\mathring{x}_{n+1}\gets \mathring{x}_{n}$\;
        $[\ul\Theta_{n+1}, \ol\Theta_{n+1}]_K \gets [\ul\Omega, \ol\Omega]_K$\;
    }
}
\Return Reachable set $\left\{\mathring{x}_n\exp([\ul\Theta_{n}, \ol\Theta_{n}]_{K})\right\}_{n=0}^N$;
\end{algorithm}

\subsection{Abelian Lie groups}

In an abelian (commutative) Lie group $G$, the Lie bracket is identically zero. A direct implication is, for $\Theta_1,\Theta_2\in\frakg$,
\begin{align} \label{eq:abelianBCH}
    \exp(\Theta_1)\exp(\Theta_2) = \exp(\Theta_1 + \Theta_2),
\end{align}
which follows by, \emph{e.g.}, a straightforward computation from the BCH formula~\eqref{eq:bch}. 
The following Proposition characterizes some interesting behavior for abelian Lie groups.

\begin{proposition}[Abelian Lie groups] \label{prop:abelian}
    Let $G$ be an abelian Lie group, with a left-invariant cone field $\calK$ identified by $K\subset\frakg$. The following statements hold:
    \begin{enumerate}[(i)]
        \item The left-trivialized $\dexp_\Theta$ and $\dexp^{-1}_\Theta$ is the identity map for every $\Theta\in\frakg$; \label{prop:abelian:p1}
        \item The map $\textsf{BCH}_{\Theta_1}([\ul\Theta_2,\ol\Theta_2]_K) = [\Theta_1 + \ul\Theta_2, \Theta_1 + \ol\Theta_2]_K$ is an inclusion function for $\operatorname{bch}_{\Theta}$. \label{prop:abelian:p2}
    \end{enumerate}
\end{proposition}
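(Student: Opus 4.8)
The plan is to treat the two statements separately, both following from the vanishing of the adjoint action $\operatorname{ad}_\Theta = \dbrak{\Theta,\cdot}$, which is identically zero on an abelian Lie algebra.

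For part~(i), I would substitute $\operatorname{ad}_\Theta = 0$ directly into the series representations~\eqref{eq:dexp} and~\eqref{eq:dexpinv}. Since $(\operatorname{ad}_\Theta)^k = 0$ for every $k\geq 1$, every term in each sum except the $k=0$ term vanishes. In~\eqref{eq:dexp} the surviving term is $\frac{(-1)^0}{1!}(\operatorname{ad}_\Theta)^0$, and in~\eqref{eq:dexpinv} the surviving term is $\frac{B_0}{0!}(\operatorname{ad}_\Theta)^0$; using the convention $B_0 = 1$ together with $(\operatorname{ad}_\Theta)^0 = \mathrm{id}$, both expressions reduce to the identity map on $\frakg$.

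For part~(ii), I would first use~\eqref{eq:abelianBCH} to conclude that the BCH map itself simplifies to addition: since $\exp(\Omega)\exp(\Theta) = \exp(\Omega+\Theta)$, the defining relation~\eqref{eq:bchproblem} for $\operatorname{bch}$ gives $\operatorname{bch}_\Theta(\Omega) = \Omega + \Theta$ for all $\Omega,\Theta\in\frakg$. It then remains to verify the inclusion property~\eqref{eq:BCHincl}, namely that $\operatorname{bch}_\Theta(\Omega)\in[\ul\Omega+\Theta,\ol\Omega+\Theta]_K$ whenever $\Omega\in[\ul\Omega,\ol\Omega]_K$. This is immediate from the translation invariance of the order $\leq_K$: if $\ul\Omega\leq_K\Omega\leq_K\ol\Omega$, then $(\Omega+\Theta)-(\ul\Omega+\Theta) = \Omega-\ul\Omega\in K$ and $(\ol\Omega+\Theta)-(\Omega+\Theta) = \ol\Omega-\Omega\in K$, so $\ul\Omega+\Theta\leq_K\Omega+\Theta\leq_K\ol\Omega+\Theta$, establishing the containment.

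There is no genuine obstacle here; the whole content is that vanishing Lie brackets collapse both the exponential-derivative series and the BCH series to their zeroth-order terms. The only points deserving a moment of care are recording that the leading Bernoulli number is $B_0 = 1$, so that $\dexp^{-1}_\Theta$---and not merely $\dexp_\Theta$---reduces to the identity, and invoking the translation invariance of the cone order $\leq_K$ rather than any special structure of the cone $K$.
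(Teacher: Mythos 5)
Your proposal is correct, and for part~(ii) it is essentially the paper's argument (the paper likewise reduces $\operatorname{bch}_\Theta(\Omega)$ to $\Omega+\Theta$ by killing all brackets in~\eqref{eq:bch} and then notes the containment; you merely spell out the translation invariance of $\leq_K$ that the paper leaves implicit). For part~(i), however, you take a genuinely different route: you substitute $\operatorname{ad}_\Theta=0$ into the quoted series representations~\eqref{eq:dexp} and~\eqref{eq:dexpinv} so that only the $k=0$ terms survive, whereas the paper works from the defining property of $\dexp$ as the left-trivialized tangent of $\exp$, computing $\frac{d}{dt}\big|_{t=0}\exp(\Theta+t\Omega)=\exp(\Theta)\,\frac{d}{dt}\big|_{t=0}\exp(t\Omega)=\exp(\Theta)\Omega$ via the abelian identity $\exp(\Theta+t\Omega)=\exp(\Theta)\exp(t\Omega)$ and the fact that $t\mapsto\exp(t\Omega)$ is the integral curve of $\Omega$ through $e$. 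Your version is shorter but leans on the analytic expansions being taken as given; the paper's version is self-contained from the definition of $\dexp$ and does not require recalling that $B_0=1$. One small caution on your part~(ii): deducing $\operatorname{bch}_\Theta(\Omega)=\Omega+\Theta$ from~\eqref{eq:abelianBCH} via the defining relation~\eqref{eq:bchproblem} implicitly uses injectivity of $\exp$, which only holds on $N_{\exp}$; reading the conclusion off the series~\eqref{eq:bch} directly, as the paper does, avoids that caveat entirely.
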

\begin{proof}
    Regarding (i), let $\Theta,\Omega\in\frakg$. Since $G$ is Abelian,
    \begin{align*}
        \frac{d}{dt}\Big|_{t=0} \exp(\Theta + t\Omega) 
        &= \frac{d}{dt} \Big|_{t=0} \exp(\Theta)\exp(t\Omega) \\
        &= \exp(\Theta)\frac{d}{dt} \Big|_{t=0} \exp(t\Omega) 
    \end{align*}
    Since $\exp(t\Omega)$ the same as the integral curve $\gamma:\R\to G$ of the left-invariant vector field identified by $\Omega$ passing through $\gamma(0) = e$, $\frac{d}{dt} \big|_{t=0} \exp(t\Omega) = \gamma'(0) = \Omega$. Thus,
    \begin{align*}
        \frac{d}{dt}\Big|_{t=0} \exp(\Theta + t\Omega) &= \exp(\Theta)\Omega.
    \end{align*}
    Thus, the left-trivialized $\dexp$ is $\dexp_\Theta(\Omega) = \Omega$, which in turn implies that $\dexp^{-1}_\Theta$ is also the identity map.

    Regarding (ii), let $\Theta_1\in\frakg$, and $\Theta_2\in[\ul\Theta_2,\ol\Theta_2]_K\in\I_K\frakg$. Since the Lie bracket is identically zero, using~\eqref{eq:bch}
    \begin{align*}
        \operatorname{bch}_{\Theta_1}(\Theta_2) = \Theta_1 + \Theta_2 \in [\Theta_1 + \ul\Theta_2, \Theta_1 + \ol\Theta_2]_K,
    \end{align*}
    completing the proof.
\end{proof}

In the case of an abelian state space, Proposition~\ref{prop:abelian} simplifies Algorithm~\ref{alg:RKMK-Reach}. Statement~\eqref{prop:abelian:p1} removes the need to evaluate the $\dexp^{-1}$ function to define the dynamics~\eqref{eq:liealgsys}, which also makes it simpler to verify the monotonicity of $\Upsilon$.
Statement~\eqref{prop:abelian:p2} provides a simple characterization of the \textsf{BCH} inclusion function as a translational shift in the Lie algebra, which simplifies the recentering process.
Additionally, the bidirectionality of~\eqref{prop:abelian:p2} guarantees no additional overapproximation error in the recentering step, since
\begin{align*}
    \textsf{BCH}_{-\Theta_1} ([\Theta_1 + \ul\Theta_2, \Theta_1 + \ol\Theta_2]_K) = [\ul\Theta_2,\ol\Theta_2]_K.
\end{align*}

\section{CASE STUDIES} \label{sec:casestudies}
\stepcounter{footnote}\footnotetext{Runtimes reported on computer with a Ryzen 5 5600X and 32 GB RAM. \\ \url{https://github.com/gtfactslab/Harapanahalli_CDC2024}}
\subsection{Monotonicity on the $2$-torus}
Consider the torus identified with the abelian matrix group $SO(2)^2$ and the consensus dynamics
\begin{align*}
    \dot{x}_1 = x_1\left(\hat{\omega}_1 + \log(x_2x_1^{-1})\right), \ \dot{x}_2 = x_2\left(\hat{\omega}_2 + \log(x_1x_2^{-1})\right),
\end{align*}
where $x_1,x_2\in SO(2)$, $\log:SO(2)\to\so(2)$, $\omega_1,\omega_2\in \R$, and the hat identification $\hat{\cdot}:\R\to\so(2)$ such that $\hat{\omega} = \begin{bsmallmatrix}
    0 & -\omega \\
    \omega & 0
\end{bsmallmatrix}$. Define the vee map $\cdot^\vee:\so(2)\to\R$ as the inverse of the hat map.
The exponential map $\exp:\so(2)\to SO(2)$ is injective on the neighborhood $N_{\exp} = \widehat{(-\pi,\pi)^2}$. 

The equivalent Lie algebra system~\eqref{eq:liealgsys} around center $\mathring{x}$ is
\begin{align} \label{eq:torusliealg}
\begin{aligned}
    \dot{\theta}^\vee_1 &= \omega_1 + \log(\mathring{x}_2 \exp(\theta_2) \mathring{x}_1^{-1} \exp(-\theta_1))^\vee \\
    \dot{\theta}^\vee_2 &= \omega_2 + \log(\mathring{x}_1 \exp(\theta_1) \mathring{x}_2^{-1} \exp(-\theta_2))^\vee
\end{aligned}
\end{align}
where $\theta_1,\theta_2\in N_{\exp}$. It can be shown that~\eqref{eq:torusliealg} is monotone as long as $(\log(\mathring{x}_2) + \theta_2 - \log(\mathring{x}_1) -\theta_1)^\vee \in (-\pi, \pi)$, which occurs when 
$\log(\mathring{x}_2), \theta_2, \log(\mathring{x}_1), \theta_1$
all lie within a single tangent interval of less than $\pi$ width.

Consider the left-invariant cone field $\calK$ induced by $K = (\widehat{\R_+})^2$.
We apply Algorithm~\ref{alg:RKMK-Reach}, with $\omega_1 = 5$, $\omega_2 = 2$, $x_1(0) \in \exp\left(\widehat{\frac{\pi}{2}}\right)\exp([-\widehat{0.6},\widehat{0.6}]_K)$ and $x_2(0) \in \exp\left(\widehat{\pi}\right)\exp([-\widehat{0.1},\widehat{0.1}]_K)$. 
We check to ensure the system is indeed monotone at every time step. 
Finally, since $SO(2)^2$ is abelian, we recenter at every time step at no loss.
Figure~\ref{fig:torus} shows the reachable set computed to $T = 3s$, which took $0.039 \pm 0.002$ seconds to compute, averaged over $100$ runs.

\begin{figure}
    \centering
    \includegraphics[width=0.24\columnwidth]{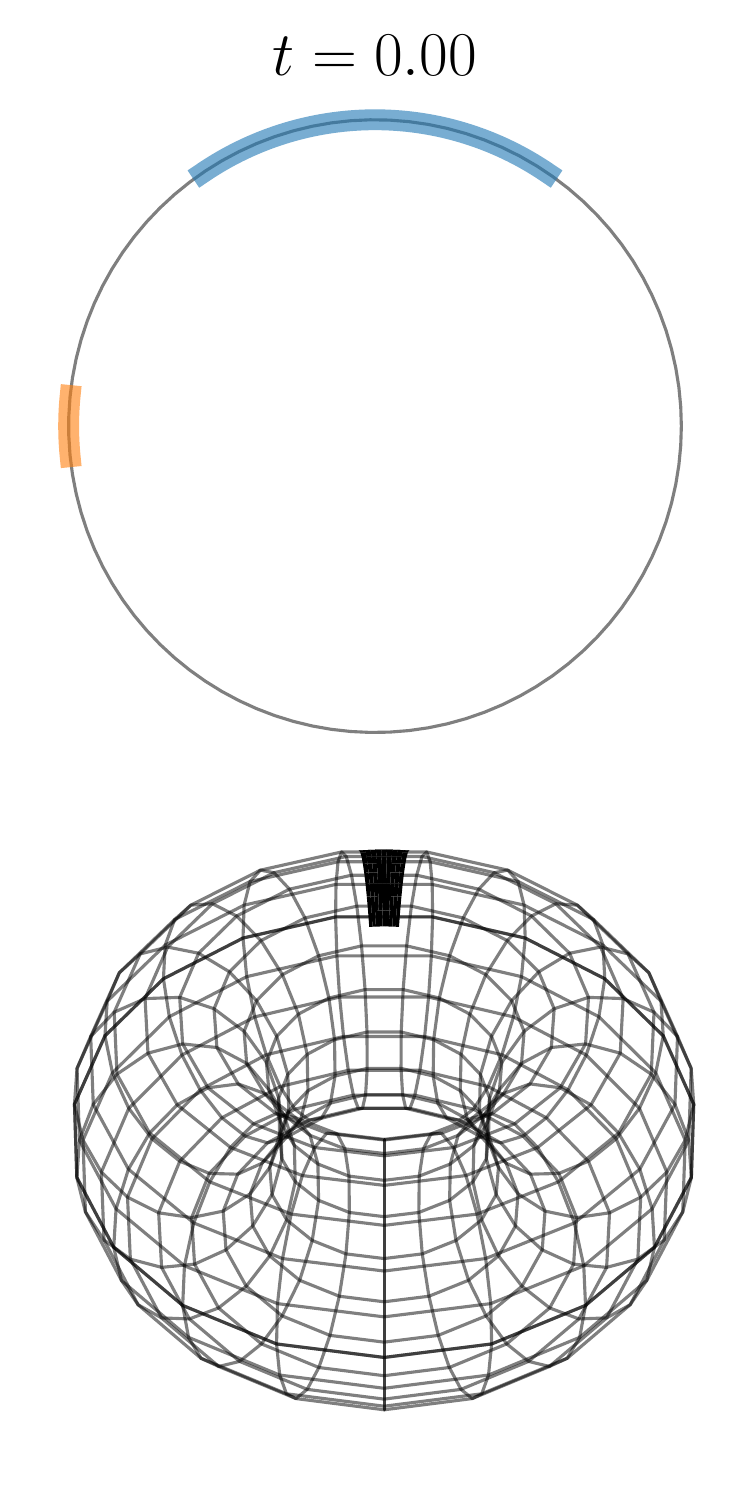}
    \includegraphics[width=0.24\columnwidth]{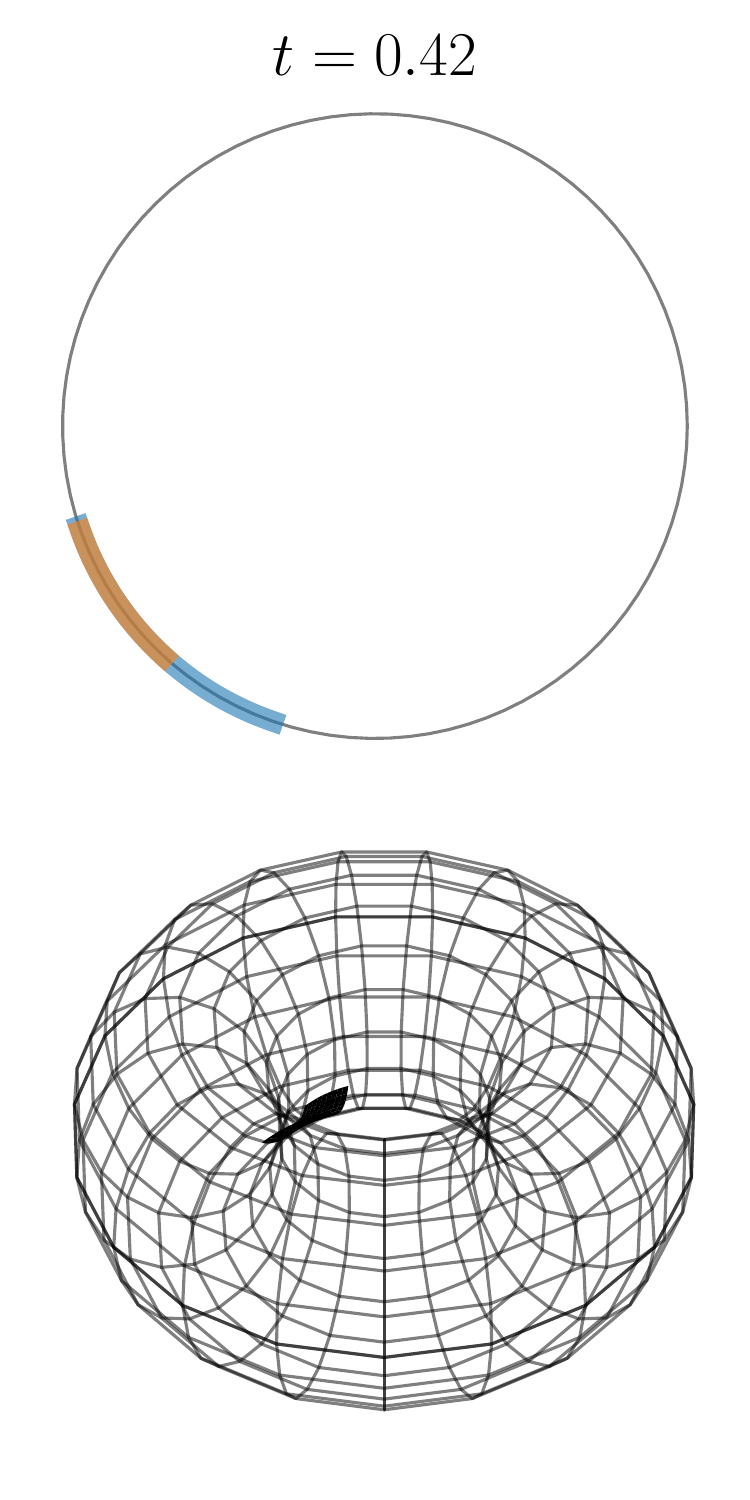}
    \includegraphics[width=0.24\columnwidth]{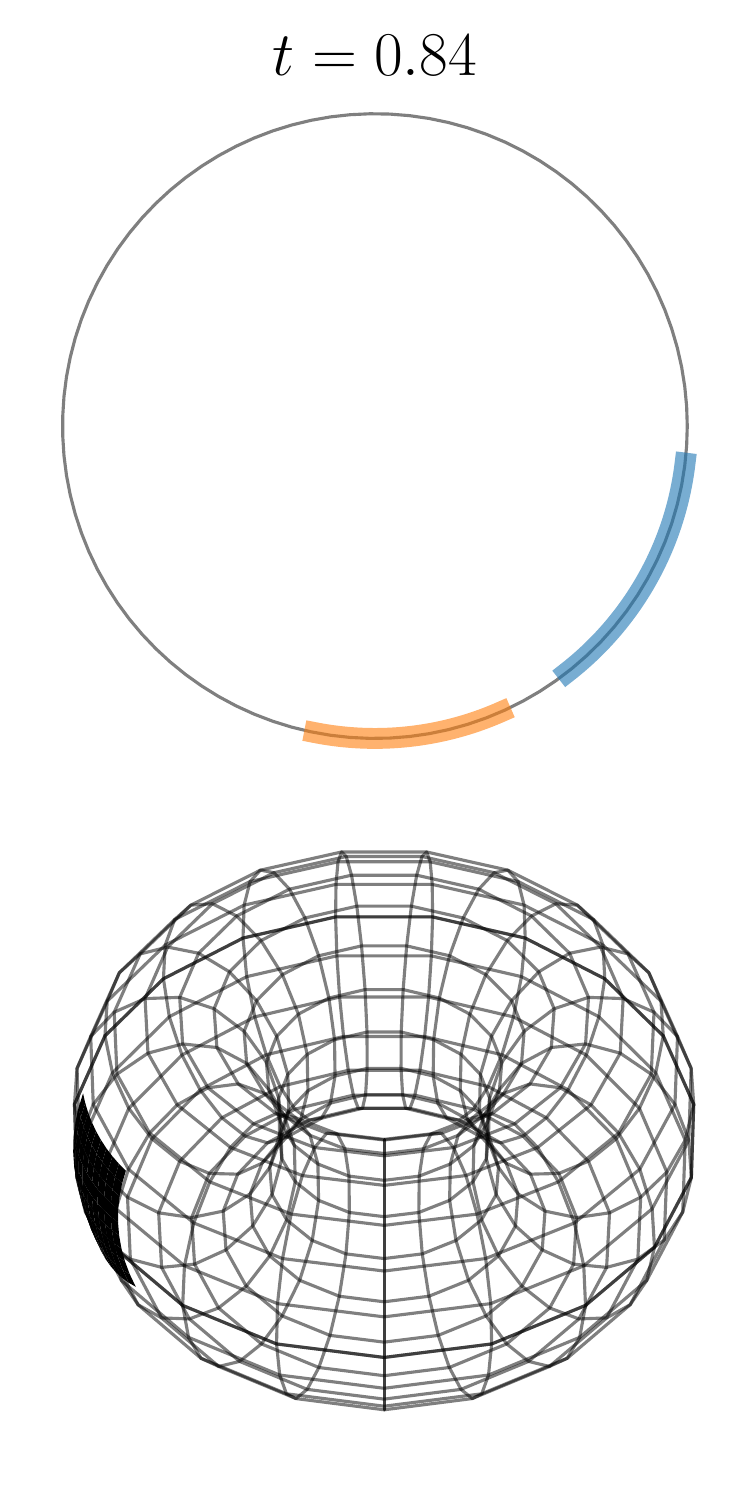}
    \includegraphics[width=0.24\columnwidth]{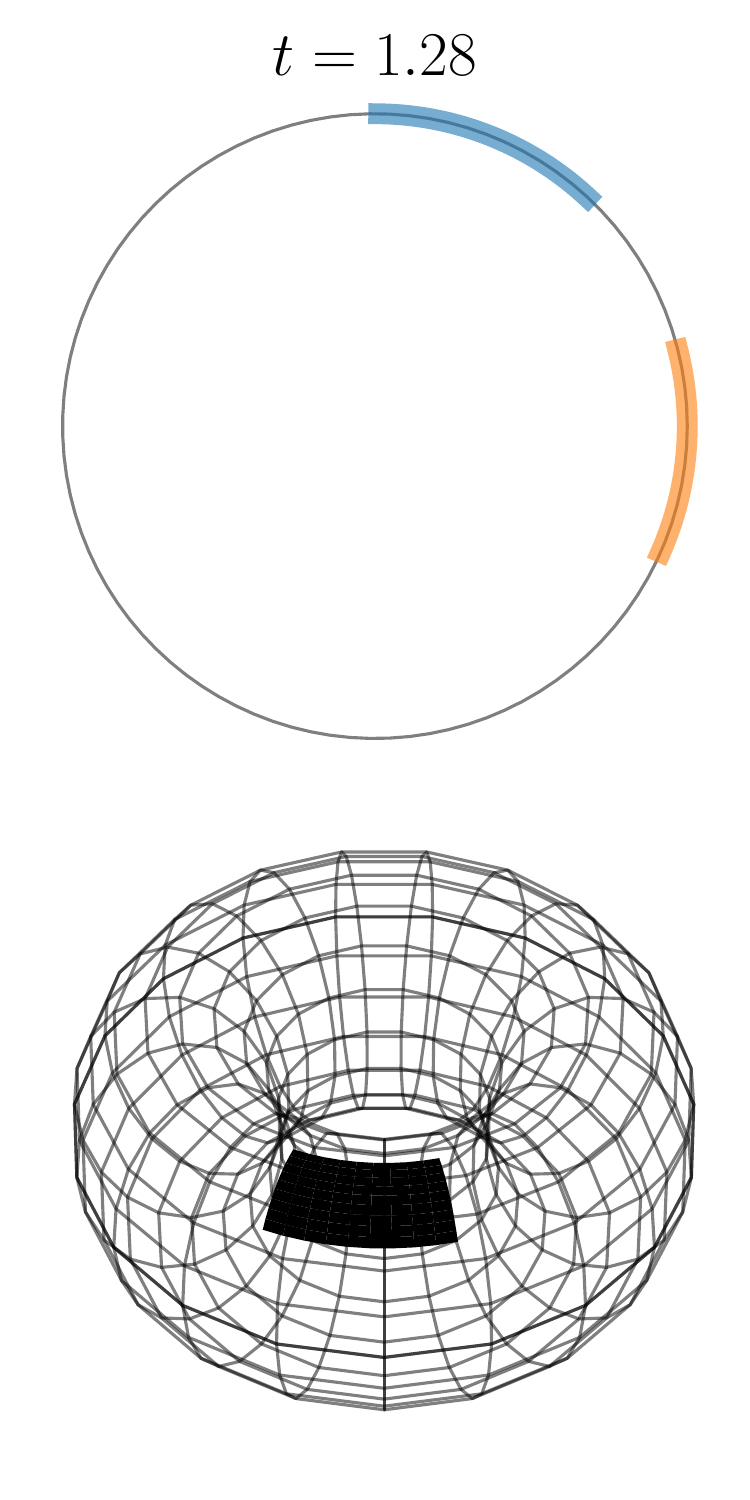}
    \includegraphics[width=0.24\columnwidth]{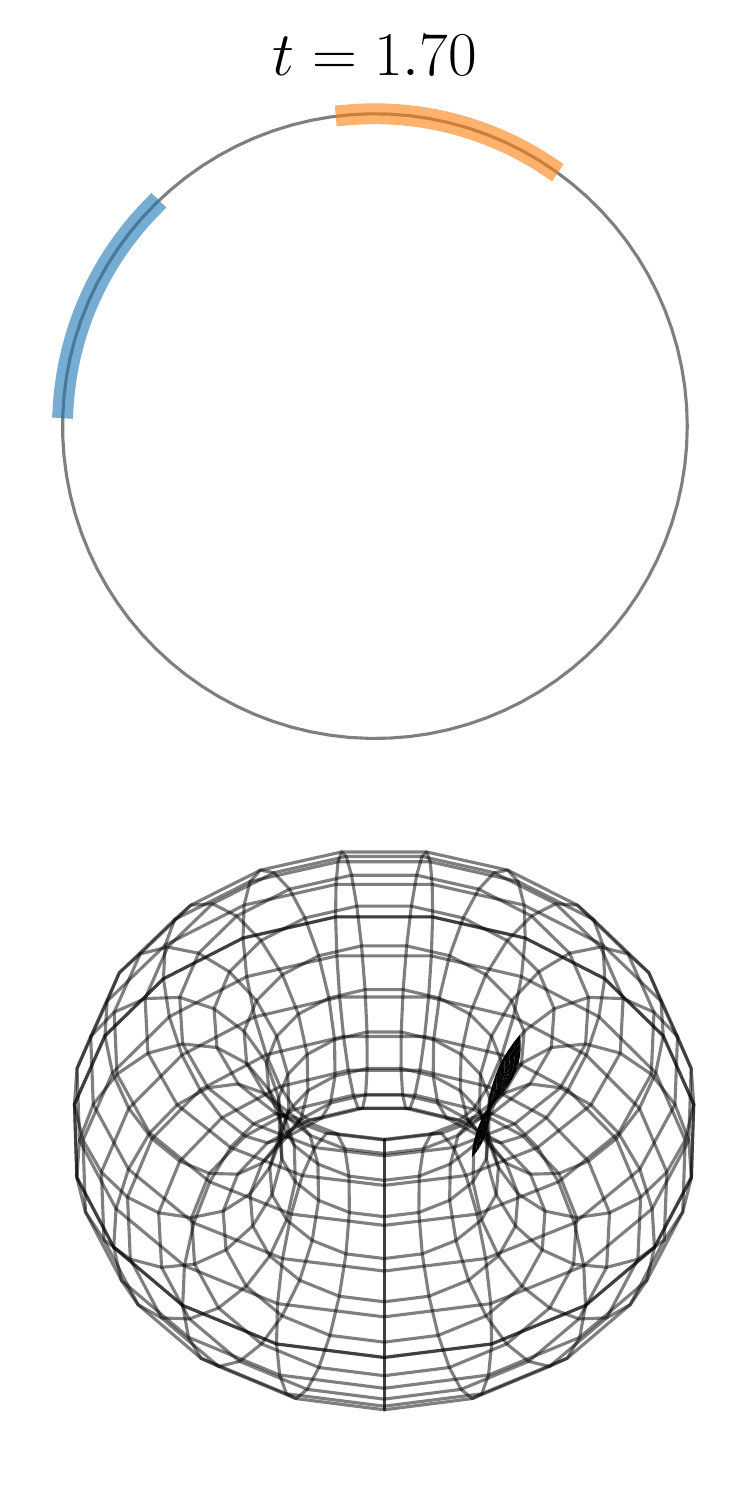}
    \includegraphics[width=0.24\columnwidth]{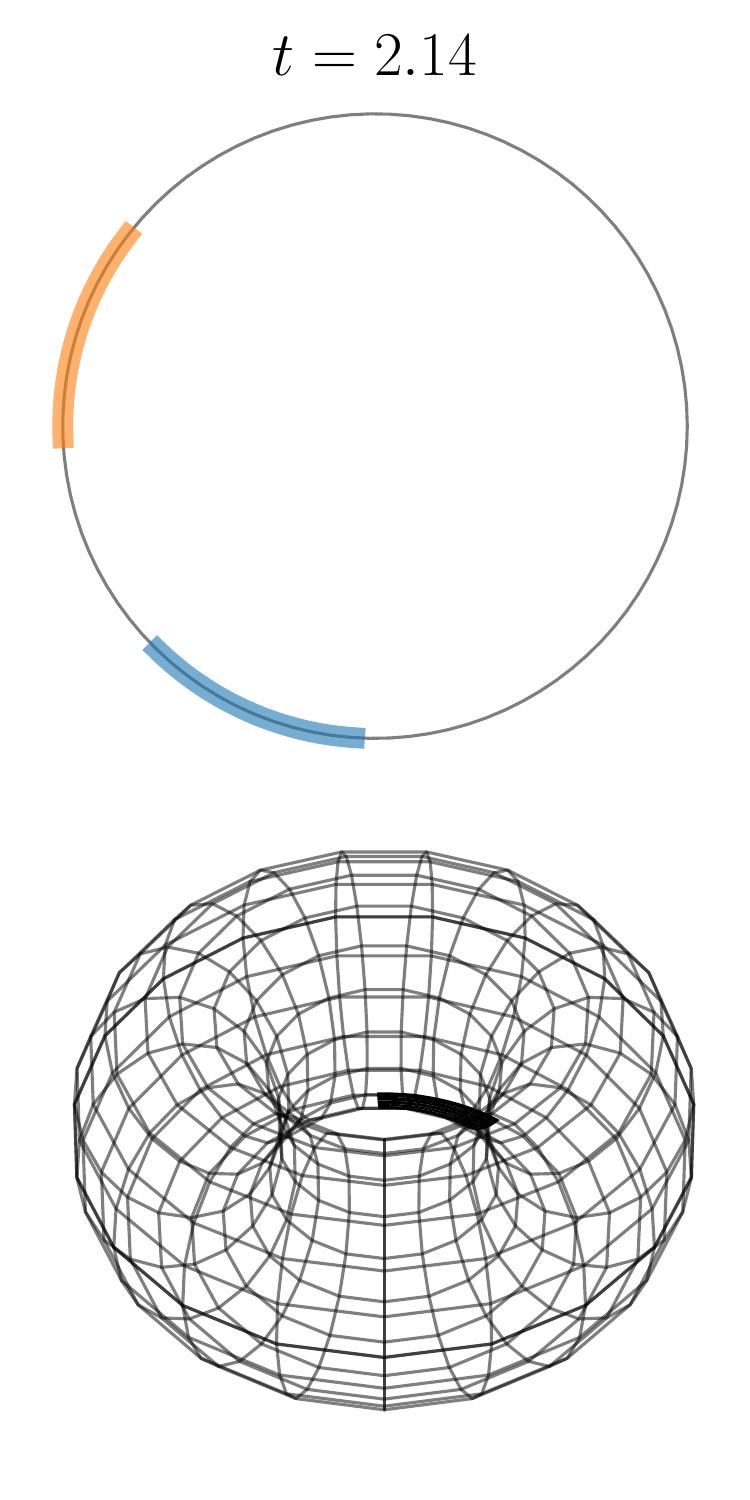}
    \includegraphics[width=0.24\columnwidth]{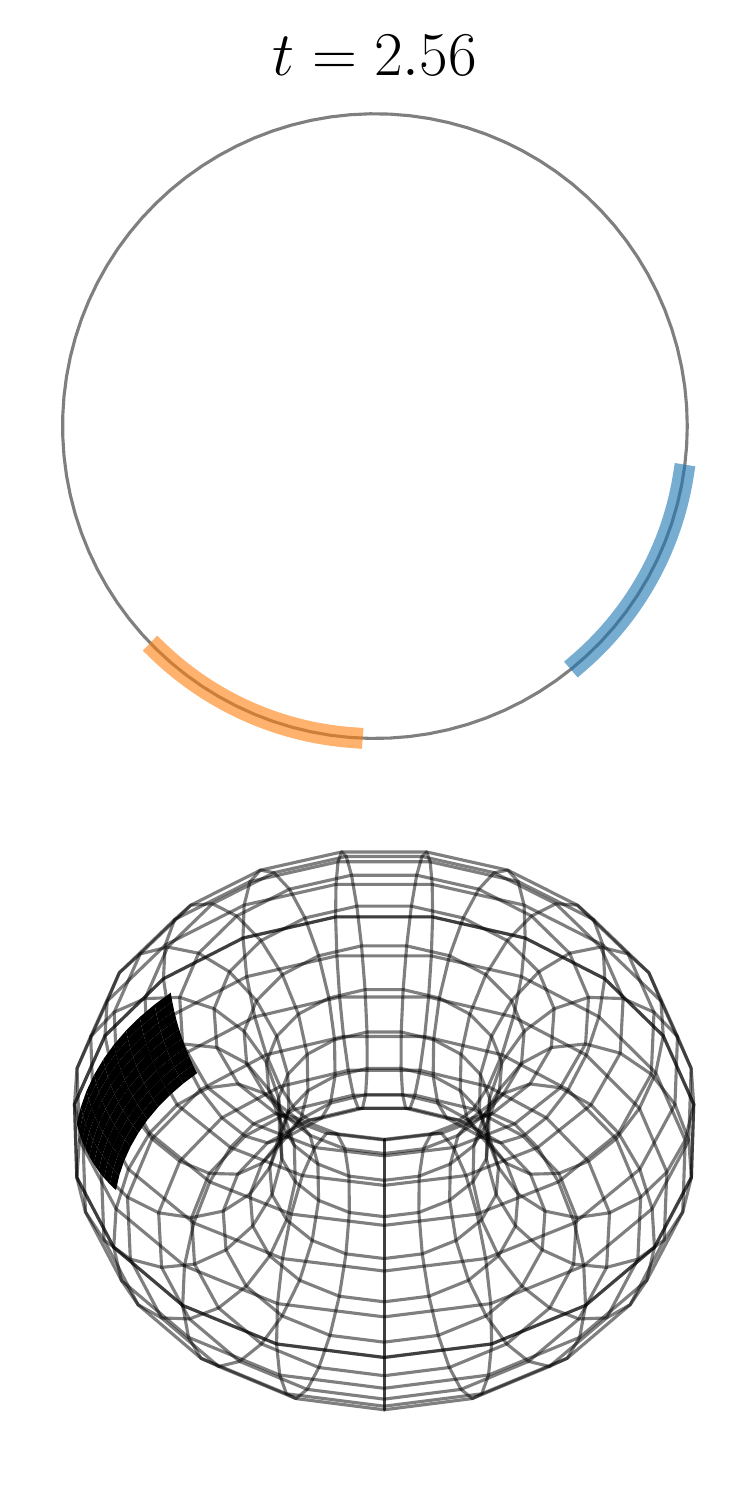}
    \includegraphics[width=0.24\columnwidth]{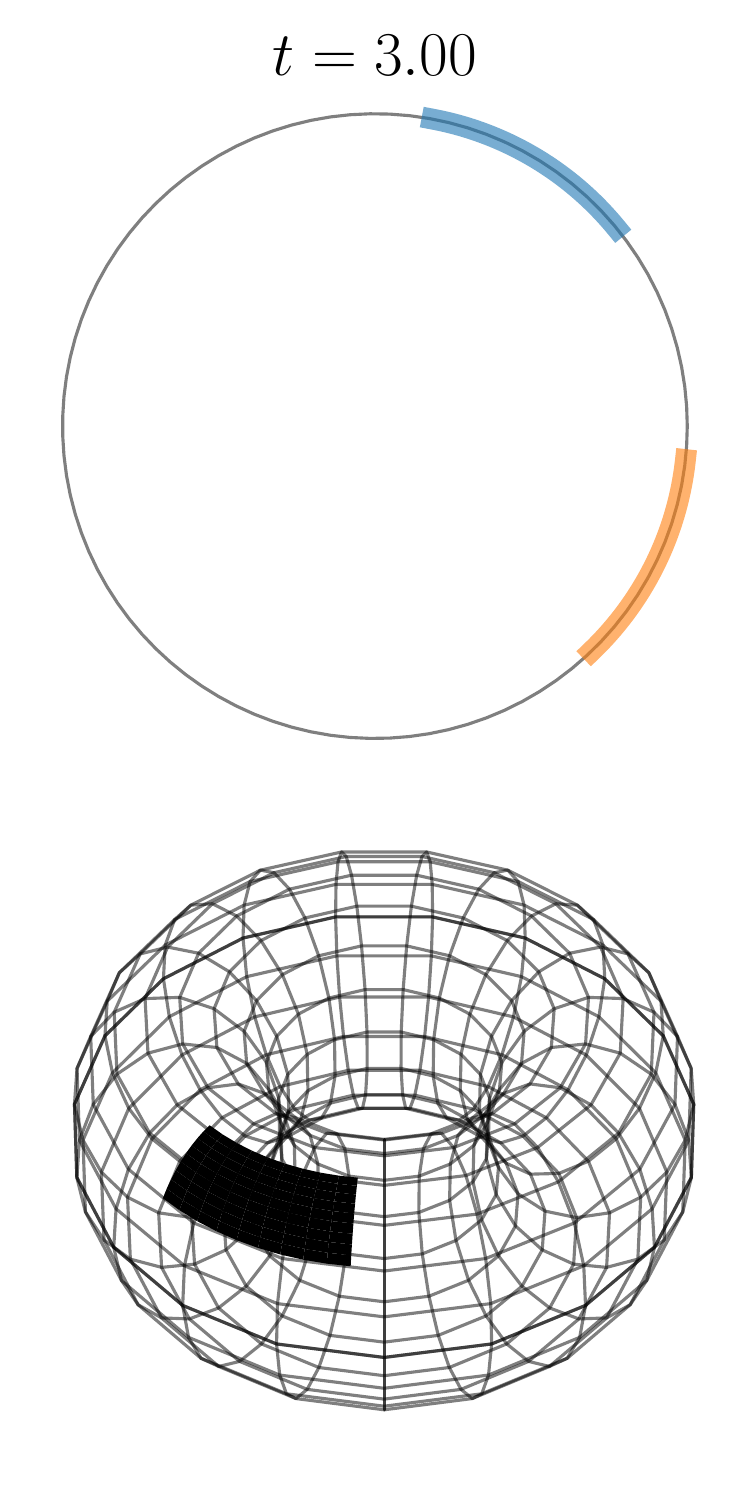}
    \caption{The reachable set $\{\mathring{x}_t\exp([\ul\Theta_t,\ol\Theta_t]_K)\}_t$ for the coupled oscillators is visualized as two arcs on a circle, as well as on a torus embedded in $\R^3$. For this system, the uncertainty in the initial condition of the blue oscillator begins larger than the orange oscillator, but the uncertainty is quickly shared between the two. For the abelian torus, the recentering can be done without any loss of information, allowing the reachable set to remain tight to the true reachable set.}
    \label{fig:torus}
\end{figure}

\subsection{Attitude control on $SO(3)$} \label{subsec:exSO3}
Consider the following control system on $SO(3)$,
\begin{align}
    \dot{R} = R\hat{u},
\end{align}
where $R\in SO(3)$ is the satellite's state, $u\in\calU = \R^3$, and the hat map $\hat{\cdot}:\R^3\to\so(3)$ is defined using the basis
\begin{align}
\begin{gathered}
    X = \begin{bsmallmatrix}
        0 & 0 & 0 \\
        0 & 0 & -1 \\
        0 & 1 & 0 
    \end{bsmallmatrix}, \
    Y = \begin{bsmallmatrix}
        0 & 0 & 1 \\
        0 & 0 & 0 \\
        -1 & 0 & 0 
    \end{bsmallmatrix}, \
    Z = \begin{bsmallmatrix}
        0 & -1 & 0 \\
        1 & 0 & 0 \\
        0 & 0 & 0 
    \end{bsmallmatrix}, \\
    \hat{u} = u_1X + u_2Y + u_3Z,
\end{gathered}
\end{align}
for $\so(3)$.
The system $\Upsilon$~\eqref{eq:liealgsys} is 
\begin{align} \label{eq:so3sys}
    \dot{\Theta}(t) &= \dexp_{\Theta}^{-1}(\hat{u}) = \dexp_{\Theta}^{-1}(u_1X + u_2Y + u_3Z)
\end{align}

Consider the left-invariant cone field $\calK$ induced by the cone $K\subset\so(3)$, where $K = \{k_1X + k_2Y + k_3Z : k_1,k_2,k_3\geq 0\}$.
The basis $\{X, Y, Z\}$ provides the identification $\so(3) \simeq \R^3$ and $K\simeq\R^3_+$. In the basis, the commutator $\dbrak{\hat{u},\hat{v}} = \widehat{u \times v}$, where $\times$ is the cross product on $\R^3$.
As such, the $\hat{\cdot}$-identified $\dexp^{-1}$~\eqref{eq:dexpinv} and BCH formula~\eqref{eq:bch} become vector valued functions $\dexp^{-1}:\R^3\times\R^3\to\R^3$ and $\operatorname{bch}:\R^3\times\R^3\to\R^3$, involving cross products. Using the natural inclusion function from \texttt{npinterval}~\cite{AH-SJ-SC:23a}, we compute inclusion functions $\textsf{DEXPINV}:\IR^3\times\IR^3\to\IR^3$ and $\textsf{BCH}_\Theta:\IR^3\to\IR^3$ by truncating the series expansions~\eqref{eq:dexpinv} and~\eqref{eq:bch} to only include fourth-order terms and below.
\textsf{DEXPINV} allows us to embed~\eqref{eq:so3sys} into an embedding system~\eqref{eq:mmembsys} using the identification from~\eqref{eq:euclididentify}.

We consider a time-varying control input $\bfu:[0,5]\to\R^3$,
\begin{align*}
    \bfu(t) = \left[\frac{5 - t}{5}, 1 - \left(\frac{t}{5}\right)^2, \sin\left(\frac{\pi t}{2}\right)\right]^T + w(t),
\end{align*}
with disturbance $w(t)\in[-0.01,0.01]^3$, and the initial set $\exp(\widehat{[-0.01,0.01]^3})\subset SO(3)$. 
We use Algorithm~\ref{alg:RKMK-Reach} with RK4 (fourth order method) and the \emph{\textsf{BCH} recentering condition} set to \textsc{True} (recentering at every time step); the results are plotted in Figure~\ref{fig:SO3fig}.
Computing the reachable set to $T=5$ seconds took $0.300 \pm 0.004$ seconds, averaged over $100$ runs.
With the \emph{\textsf{BCH} recentering condition} set to \textsc{False} (no recentering at all), monte carlo simulations show that the reachable set obtained by the algorithm fails to overapproximate the true reachable set after $0.5$ seconds.

\begin{figure}
    \centering
    \includegraphics[width=0.32\columnwidth]{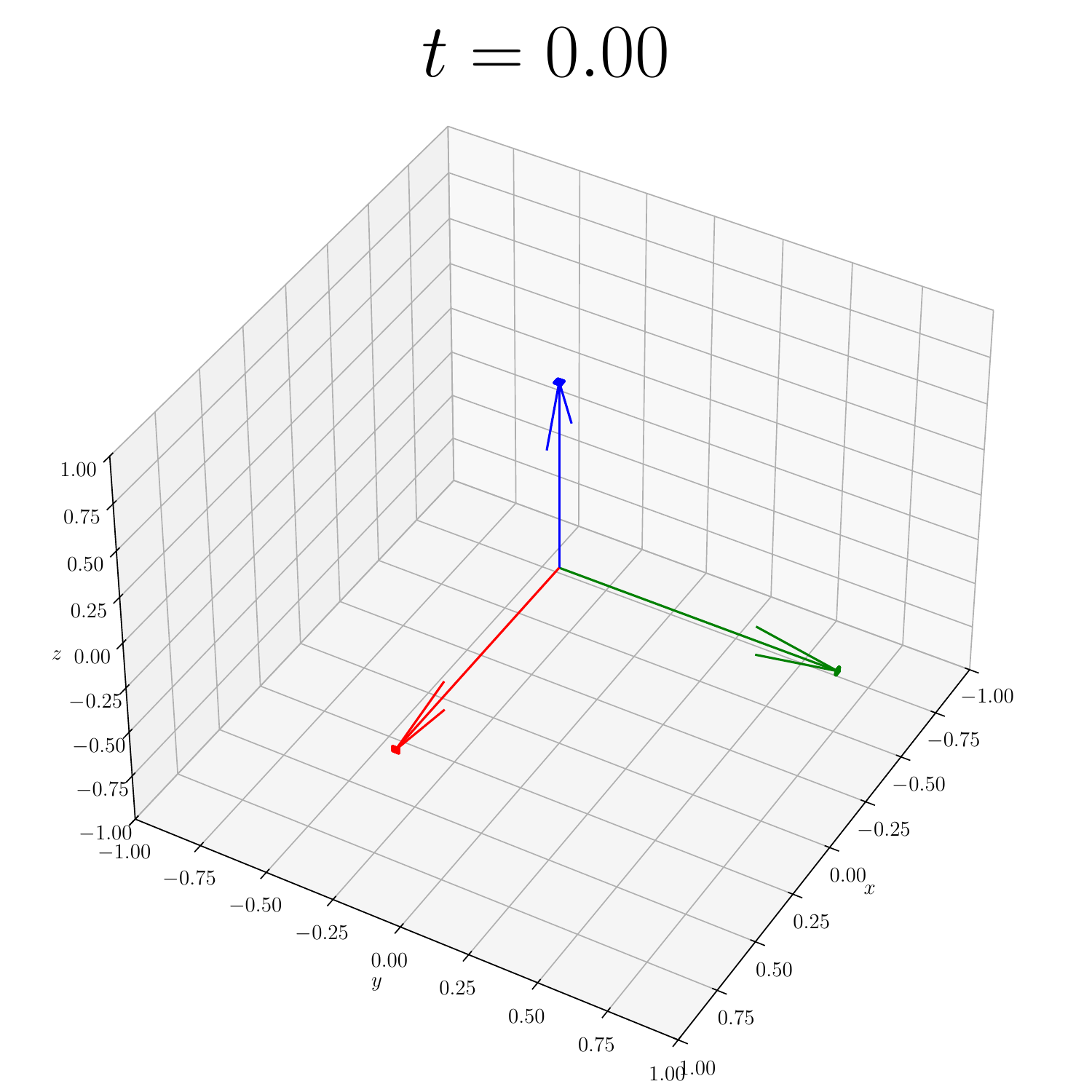}
    \includegraphics[width=0.32\columnwidth]{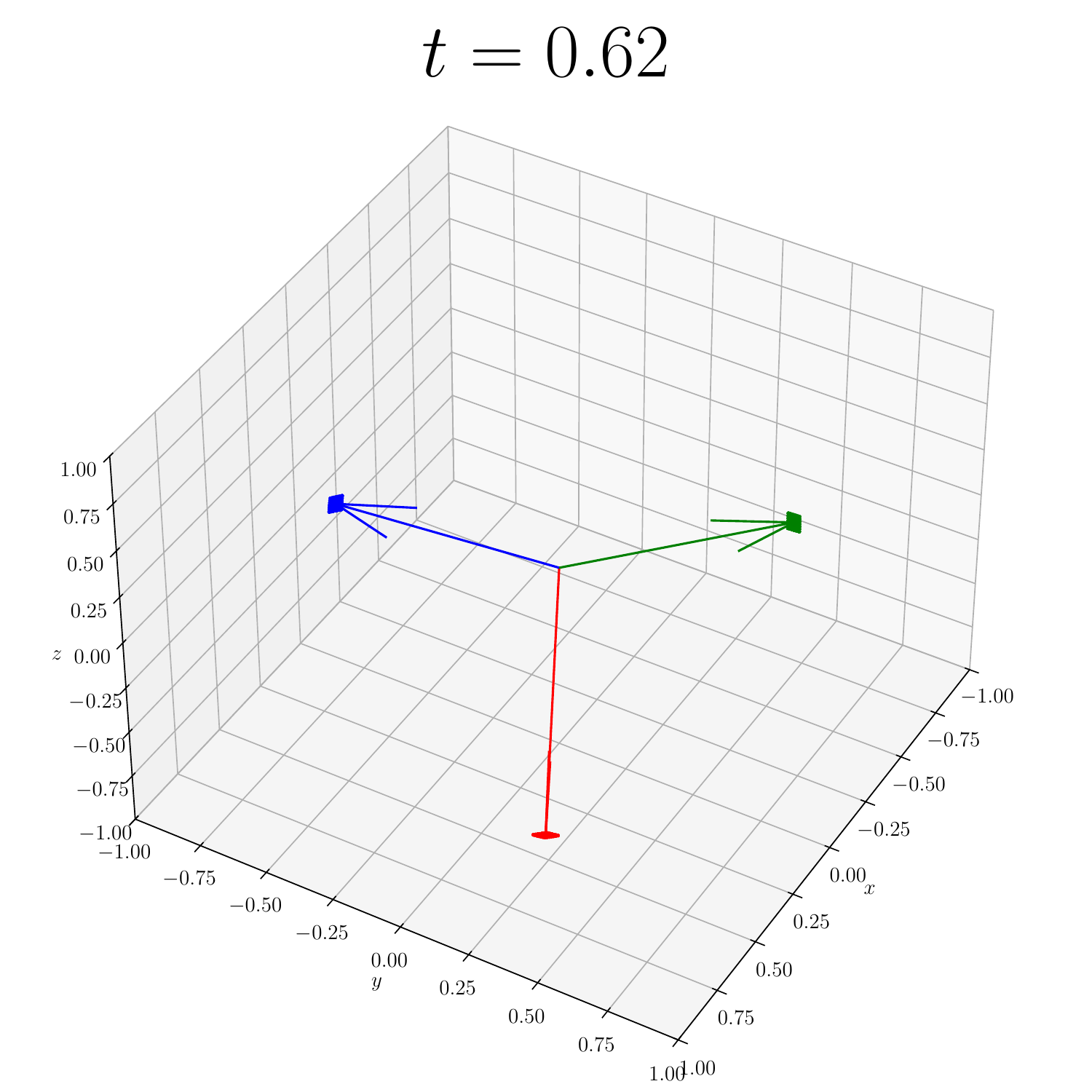}
    \includegraphics[width=0.32\columnwidth]{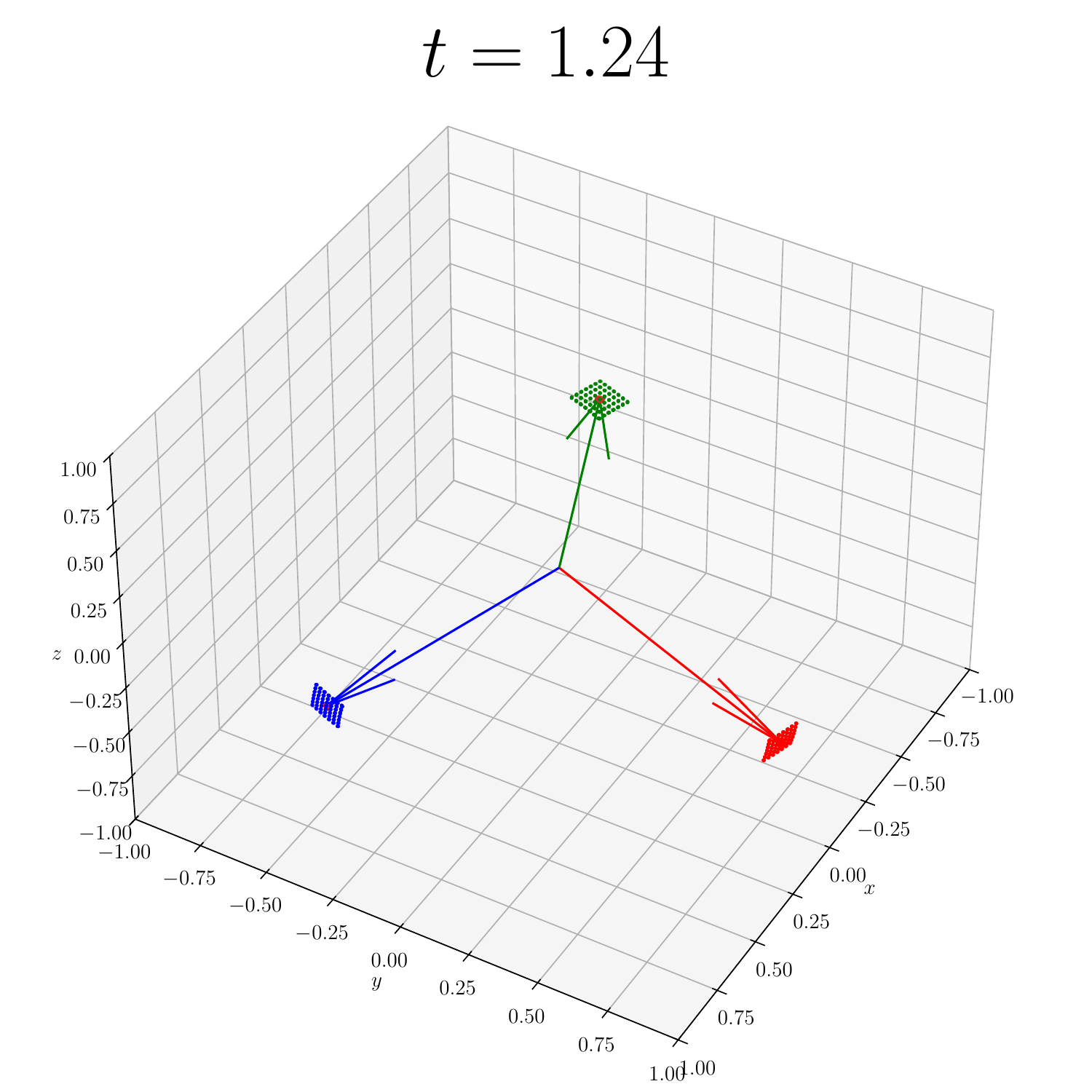}
    \includegraphics[width=0.32\columnwidth]{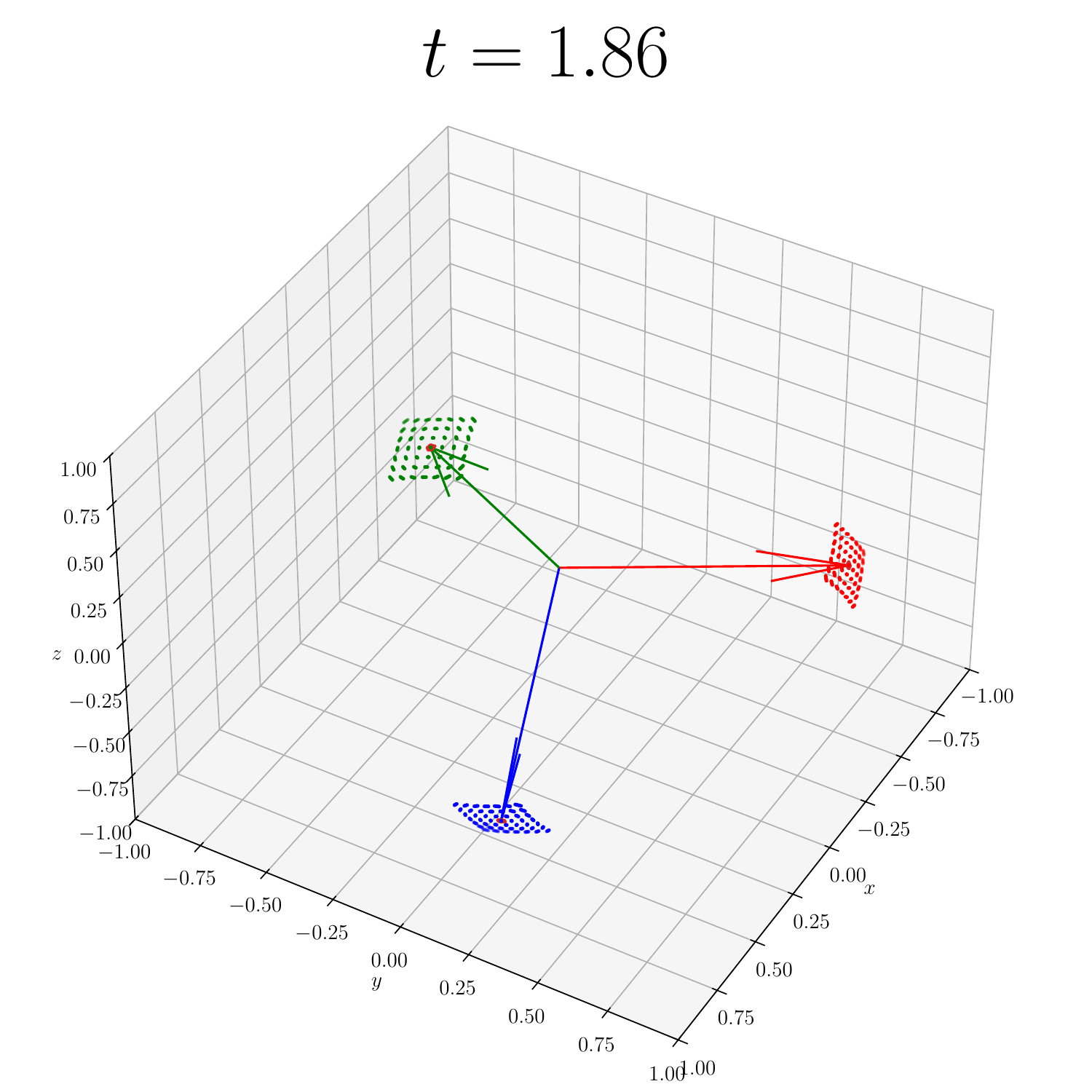}
    \includegraphics[width=0.32\columnwidth]{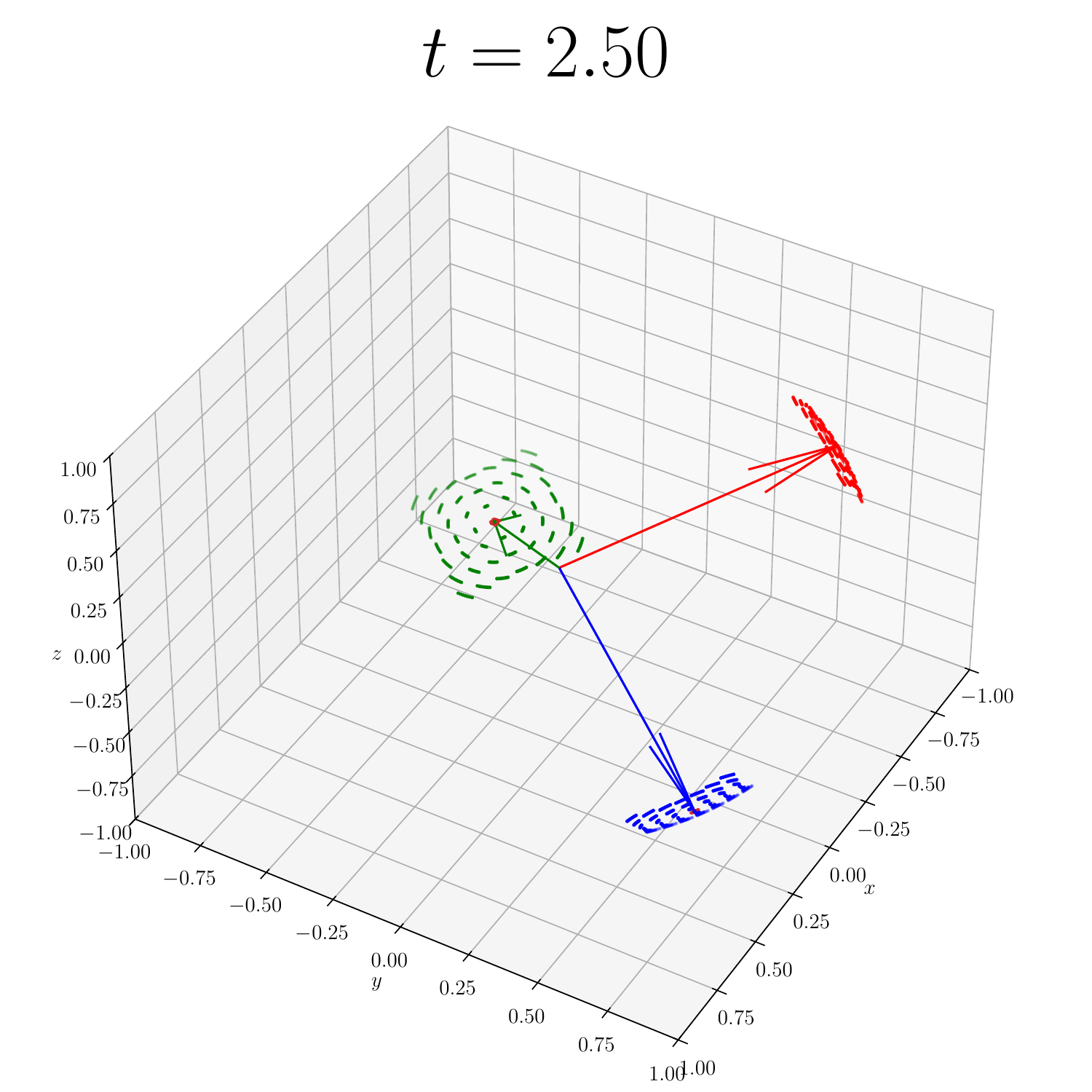}
    \includegraphics[width=0.32\columnwidth]{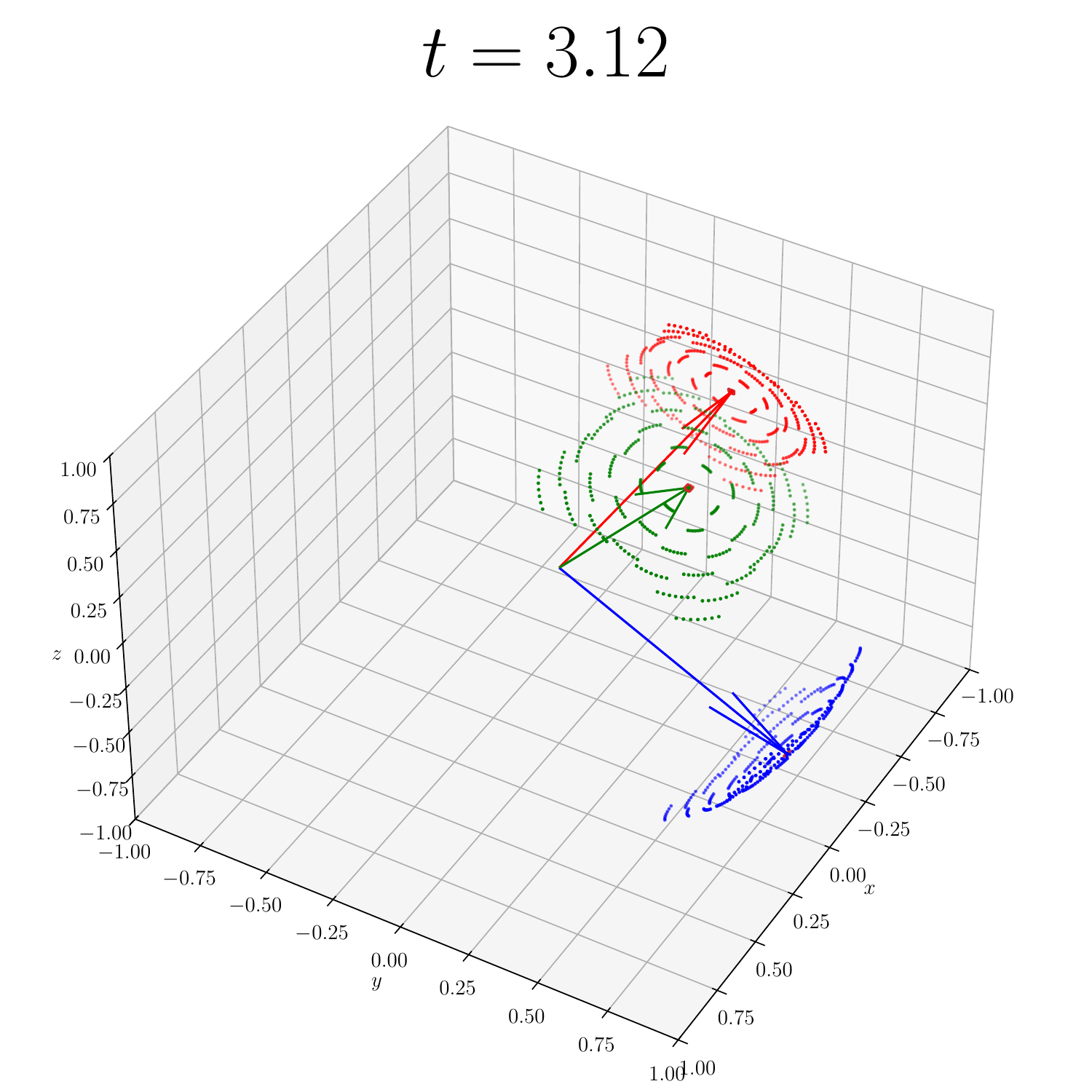}
    \includegraphics[width=0.32\columnwidth]{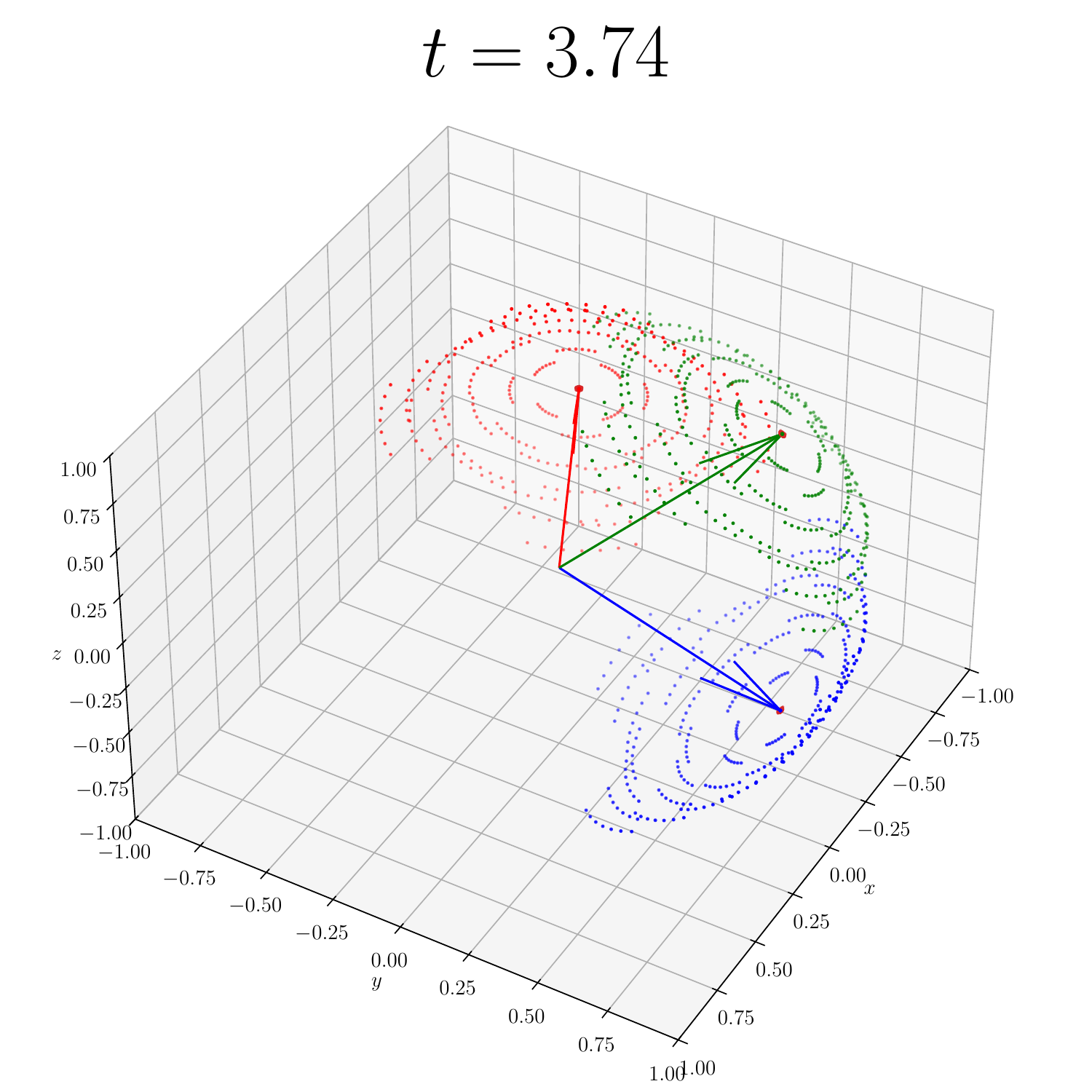}
    \includegraphics[width=0.32\columnwidth]{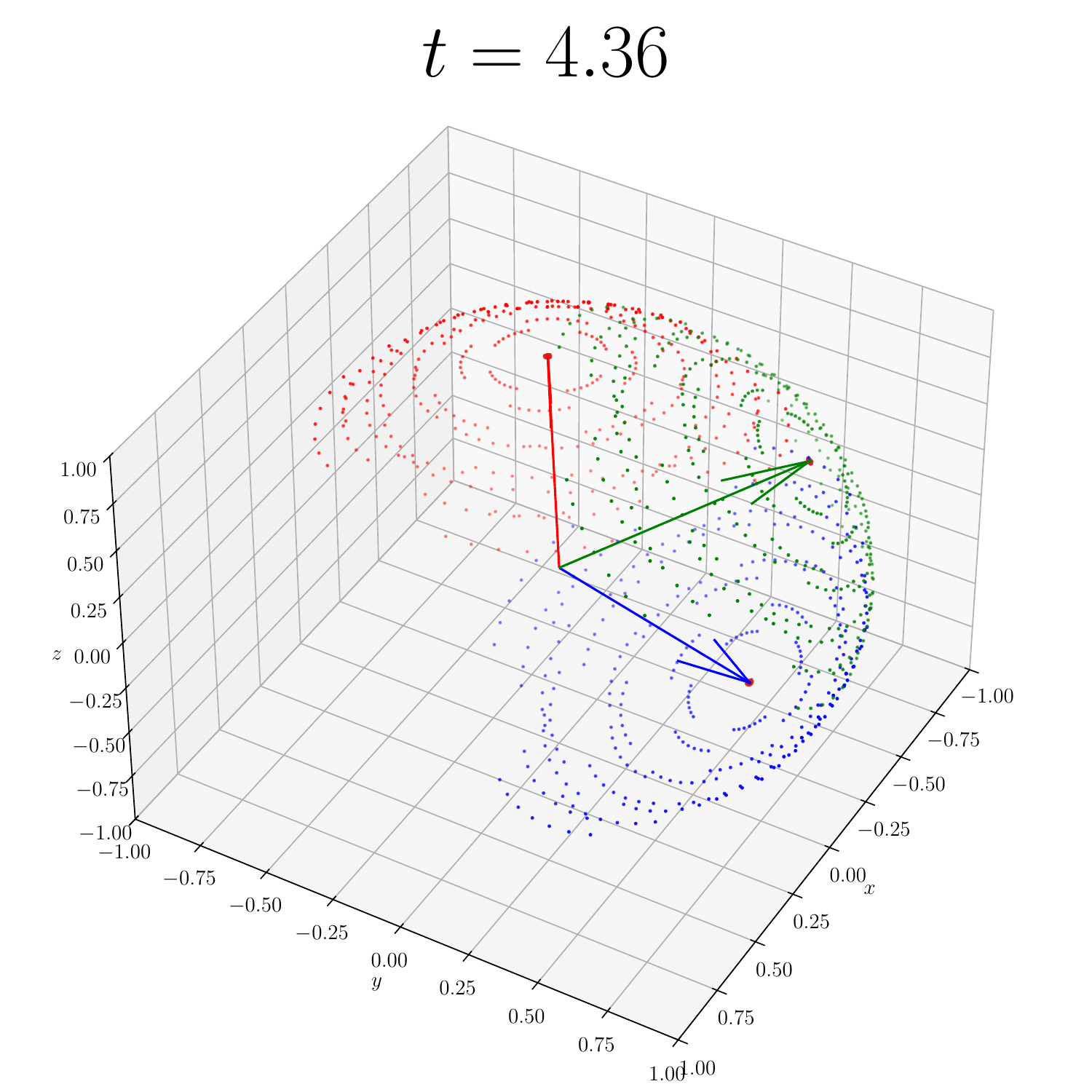}
    \includegraphics[width=0.32\columnwidth]{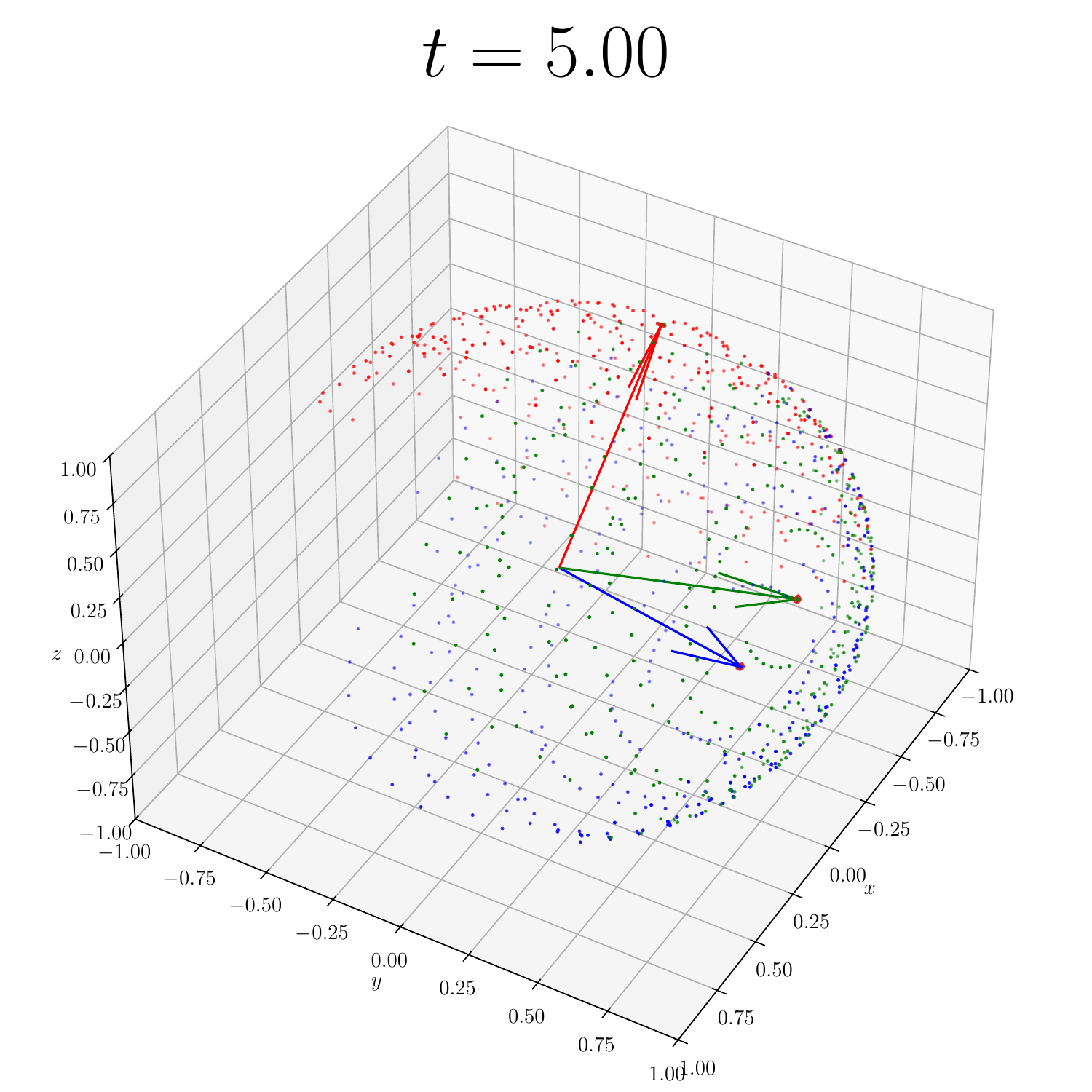}
    \caption{The overapproximated reachable set $\{\mathring{x}_t\exp([\ul\Theta_t,\ol\Theta_t]_K)\}_t$ for the orientation of a satellite evolving on $SO(3)$, for the run constantly recentering using $\textsf{BCH}$. The coordinate frame visualizes $\mathring{x}_t$, where red, green, blue represent the $x$, $y$, and $z$-axes respectively. The point clouds represent the reachable set. Each point is generated from an evenly spaced meshgrid of $7^3=343$ points $\Theta_t \in [\ul\Theta_t,\ol\Theta_t]_K$ in the Lie algebra which are exponentiated to the element $\mathring{x}_t\exp(\Theta_t)$ in the Lie group.}
    \label{fig:SO3fig}
\end{figure}

\section{DISCUSSION} \label{sec:discussion}
\subsection{Other reachability approaches}

In general, since the Lie algebra dynamics~\eqref{eq:liealgsys} evolve on a vector space, it may be possible to consider other set geometries in the tangent space beyond intervals, such as polytopes and zonotopes, allowing the use of other reachable set frameworks with different capabilities. 
Rather than simply using an inclusion function for the BCH formula, these approaches may benefit from more accurate set propogation methods, and this is an interesting direction for future research.

\subsection{Other mappings from the tangent space} \label{subsec:othermappings}

The key properties of the Lie exponential map $\exp$ used was its injectivity in the neighborhood $N_{\exp}$ around $0\in\frakg$ to a neighborhood around $e\in G$, as well as the BCH formula~\eqref{eq:bch}. In particular, the same theory applies for any other collection of mappings $\psi_p :T_pM \to M$, provided they also restrict to a diffeomorphism for some neighborhood $U_p$ of each $p$, and one can overapproximate the transition between tangent space, \emph{i.e.}, bound $h_q$ in
\[
\psi_p(v_p) = \psi_q(h_q(v_p)),
\]
with an inclusion function $\sfH_{q}$.

\subsection{Comparison to differential positivity}

Consider a manifold $\calX$ with a cone field $\calK$.
A \emph{conal curve} is a smooth curve $\gamma:I\to\calX$, such that $\gamma'(s) \in \calK(\gamma(s))$ for every $s\in I$.
A cone field $\calK$ endows the manifold $\calX$ with the \emph{conal order} $\cleq_{\calK}$, where for $x_1,x_2\in\calX$, $x_1\cleq_\calK x_2$ if and only if there exists a conal curve $\gamma:[0,1]\to\calX$ such that $\gamma(0) = x_1$ and $\gamma(1) = x_2$.
For $\ulx,\olx\in\calX$, define the \emph{conal interval} $[\ulx,\olx]_\calK:= \{x\in\calX : \ulx \cleq_\calK x \cleq_\calK \olx\}$~\cite{JL:89}.

The system defined by
\begin{align*}
    \dot{x} = f(x),
\end{align*}
where $f$ is a smooth vector field on $\calX$, is \emph{differentially positive}~\cite{FF-RS:16} with respect to $\calK$ if its linearization leaves $\calK$ invariant, \emph{i.e.}, for every $x\in\calX$ and $t>0$,
\begin{align*}
    v_x\in\calK(x) \implies (d\phi_t)_{x}(v_x)\in\calK(\phi_t(x)),
\end{align*}
where $\phi_t(x)$ is the flow from $x$.
It can be shown that differentially positive systems exhibit a similar characteristic to monotone systems through the conal order, namely that
\begin{align} \label{eq:diffposreach}
    x_0\in[\ulx_0,\olx_0]_\calK \implies \phi_t(x_0) \in [\ulx(t),\olx(t)]_\calK,
\end{align}
where $\ulx(t) = \phi_t(\ulx_0)$ and $\olx(t) = \phi_t(\olx_0)$.
However, this property is not enough to characterize reachable sets on a manifold.
In the next Example, we explore how the conal order can lose any sense of locality.

\begin{example} [Conal intervals on $\bbS$] \label{ex:conalintervals}
There are only two choices of smooth and nonvanishing cone fields on the circle $\bbS$: in each tangent space, choose either the ray pointing counterclockwise or clockwise; for smoothness to hold, all tangent spaces should be the same choice of clockwise or counterclockwise.
Between any two points $x,y\in\bbS$ on the circle, there is a curve connecting them in either the clockwise or counterclockwise direction (a conal curve), so in either cone field, $x\cleq_\calK y$ and $y\cleq_\calK x$. Any conal interval $[\ulx,\olx]_\calK$ therefore represents the entire circle since every $y\in\bbS$ satisfies $\ulx\cleq_\calK y$ and $y \cleq_\calK \olx$. So while~\eqref{eq:diffposreach} holds trivially, it is not helpful for characterizing any real reachable sets on the manifold.
\end{example}

In~\cite{FF-RS:16,CM-RS:18a,CM-RS:18b}, infinitesimal properties of the differentially positive system are considered, which subvert the use of the conal interval.
While the conal order is generally not a global partial order on $\calX$, it is a \emph{local} partial order, in the sense that for every $x\in\calX$, there exists a restriction to a neighborhood $N_x$ where the conal order partially orders $N_x$~\cite[Proposition 1.13]{JL:89}. 
This neighborhood is generally hard to find, and there has been some work in understanding the connection between the exponential map and the conal order~\cite{DK-KHN:95}.
Rather than finding this neighborhood, our formulation incurs some extra overapproximation error through the $\textsf{BCH}$ inclusion function, but allows us to simply use the exponential map from the Lie algebra of any Lie group.

\section{CONCLUSIONS}
In this paper, we developed a framework for efficient reachable set overapproximation for nonlinear systems evolving on Lie groups.
Our approach overapproximates reachable sets by evolving a tangent interval using (mixed) monotone systems theory on a locally equivalent system in the Lie algebra for small $t$, then recentering using an inclusion function for the Baker–Campbell–Hausdorff formula to remain within the injectivity neighborhood of the exponential map.
Ultimately, we developed Algorithm~\ref{alg:RKMK-Reach}, which implements this reachability scheme using Runge-Kutta-Munthe-Kaas numerical integration in the Lie algebra.
We demonstrated the approach on a consensus problem on a torus, as well as a satellite attitude control problem. 
In future work, we hope to extend this approach to apply in more general settings.

\bibliographystyle{IEEEtran}
\bibliography{diffpos}

\begin{thebibliography}{10}
\providecommand{\url}[1]{#1}
\csname url@samestyle\endcsname
\providecommand{\newblock}{\relax}
\providecommand{\bibinfo}[2]{#2}
\providecommand{\BIBentrySTDinterwordspacing}{\spaceskip=0pt\relax}
\providecommand{\BIBentryALTinterwordstretchfactor}{4}
\providecommand{\BIBentryALTinterwordspacing}{\spaceskip=\fontdimen2\font plus
\BIBentryALTinterwordstretchfactor\fontdimen3\font minus
  \fontdimen4\font\relax}
\providecommand{\BIBforeignlanguage}[2]{{%
\expandafter\ifx\csname l@#1\endcsname\relax
\typeout{** WARNING: IEEEtran.bst: No hyphenation pattern has been}%
\typeout{** loaded for the language `#1'. Using the pattern for}%
\typeout{** the default language instead.}%
\else
\language=\csname l@#1\endcsname
\fi
#2}}
\providecommand{\BIBdecl}{\relax}
\BIBdecl

\bibitem{XC-SS:22}
X.~Chen and S.~Sankaranarayanan, ``Reachability analysis for cyber-physical
  systems: Are we there yet?'' in \emph{NASA Formal Methods Symposium}.\hskip
  1em plus 0.5em minus 0.4em\relax Springer, 2022, pp. 109--130.

\bibitem{CORA}
\BIBentryALTinterwordspacing
M.~Althoff, ``An introduction to cora 2015,'' in \emph{Proc. of the 1st and 2nd
  Workshop on Applied Verification for Continuous and Hybrid Systems}.\hskip
  1em plus 0.5em minus 0.4em\relax EasyChair, December 2015, pp. 120--151.
  [Online]. Available: \url{https://easychair.org/publications/paper/xMm}
\BIBentrySTDinterwordspacing

\bibitem{JuliaReach}
S.~Bogomolov, M.~Forets, G.~Frehse, K.~Potomkin, and C.~Schilling,
  ``Juliareach: a toolbox for set-based reachability,'' in \emph{Proceedings of
  the 22nd ACM International Conference on Hybrid Systems: Computation and
  Control}, 2019, pp. 39--44.

\bibitem{DynIBEX}
J.~A. {dit Sandretto} and A.~Chapoutot, ``Validated explicit and implicit
  {R}unge--{K}utta methods,'' \emph{Reliable Computing}, vol.~22, no.~1, pp.
  79--103, Jul 2016.

\bibitem{JMS:13}
J.~M. Selig, \emph{Geometrical methods in robotics}.\hskip 1em plus 0.5em minus
  0.4em\relax Springer Science \& Business Media, 2013.

\bibitem{PC:84}
P.~Crouch, ``Spacecraft attitude control and stabilization: Applications of
  geometric control theory to rigid body models,'' \emph{IEEE Transactions on
  Automatic Control}, vol.~29, no.~4, pp. 321--331, 1984.

\bibitem{PEC-AVDS:87}
P.~E. Crouch and A.~J. van~der Schaft, \emph{Variational and Hamiltonian
  control systems}.\hskip 1em plus 0.5em minus 0.4em\relax Springer, 1987.

\bibitem{FB-AL:05}
\BIBentryALTinterwordspacing
F.~Bullo and A.~D. Lewis, \emph{Geometric Control of Mechanical Systems:
  Modeling, Analysis, and Design for Simple Mechanical Control Systems}.\hskip
  1em plus 0.5em minus 0.4em\relax Springer New York, 2005. [Online].
  Available: \url{http://dx.doi.org/10.1007/978-1-4899-7276-7}
\BIBentrySTDinterwordspacing

\bibitem{DA-EDS:03}
D.~Angeli and E.~D. Sontag, ``Monotone control systems,'' \emph{IEEE
  Transactions on automatic control}, vol.~48, no.~10, pp. 1684--1698, 2003.

\bibitem{SC-MA:15}
S.~Coogan and M.~Arcak, ``Efficient finite abstraction of mixed monotone
  systems,'' in \emph{Proceedings of the 18th International Conference on
  Hybrid Systems: Computation and Control}, 2015, pp. 58--67.

\bibitem{JL:89}
\BIBentryALTinterwordspacing
J.~D. Lawson, ``Ordered manifolds, invariant cone fields, and semigroups.''
  \emph{Forum mathematicum}, vol.~1, no.~3, pp. 273--308, 1989. [Online].
  Available: \url{http://eudml.org/doc/141617}
\BIBentrySTDinterwordspacing

\bibitem{FF-RS:16}
F.~Forni and R.~Sepulchre, ``Differentially positive systems,'' \emph{IEEE
  Transactions on Automatic Control}, vol.~61, no.~2, pp. 346--359, 2016.

\bibitem{CM-RS:18a}
\BIBentryALTinterwordspacing
C.~Mostajeran and R.~Sepulchre, ``Positivity, monotonicity, and consensus on
  lie groups,'' \emph{SIAM Journal on Control and Optimization}, vol.~56,
  no.~3, pp. 2436--2461, 2018. [Online]. Available:
  \url{https://doi.org/10.1137/17M1127168}
\BIBentrySTDinterwordspacing

\bibitem{CM-RS:18b}
------, ``Monotonicity on homogeneous spaces,'' \emph{Mathematics of Control,
  Signals, and Systems}, vol.~30, pp. 1--25, 2018.

\bibitem{AI-HMK-SN-AZ:00}
A.~Iserles, H.~Z. Munthe-Kaas, S.~P. Nørsett, and A.~Zanna, ``Lie-group
  methods,'' \emph{Acta Numerica}, vol.~9, p. 215–365, 2000.

\bibitem{EH-etal:06}
E.~Hairer, M.~Hochbruck, A.~Iserles, and C.~Lubich, ``Geometric numerical
  integration,'' \emph{Oberwolfach Reports}, vol.~3, no.~1, pp. 805--882, 2006.

\bibitem{HMK:98}
H.~Munthe-Kaas, ``Runge-kutta methods on lie groups,'' \emph{BIT Numerical
  Mathematics}, vol.~38, pp. 92--111, 1998.

\bibitem{HMK:99}
------, ``High order runge-kutta methods on manifolds,'' \emph{Applied
  Numerical Mathematics}, vol.~29, no.~1, pp. 115--127, 1999.

\bibitem{SB-etal:09}
S.~Blanes, F.~Casas, J.-A. Oteo, and J.~Ros, ``The magnus expansion and some of
  its applications,'' \emph{Physics reports}, vol. 470, no. 5-6, pp. 151--238,
  2009.

\bibitem{LJ-MK-OD-EW:01}
L.~Jaulin, M.~Kieffer, O.~Didrit, and {\'E}.~Walter, \emph{Applied Interval
  Analysis}.\hskip 1em plus 0.5em minus 0.4em\relax Springer London, 2001.

\bibitem{AH-SJ-SC:23a}
A.~Harapanahalli, S.~Jafarpour, and S.~Coogan, ``A toolbox for fast interval
  arithmetic in numpy with an application to formal verification of neural
  network controlled systems,'' 2023.

\bibitem{SJ-AH-SC:23}
S.~Jafarpour, A.~Harapanahalli, and S.~Coogan, ``Efficient interaction-aware
  interval analysis of neural network feedback loops,'' \emph{IEEE Transactions
  on Automatic Control}, pp. 1--16, 2024.

\bibitem{JCB:64}
J.~C. Butcher, ``Implicit runge-kutta processes,'' \emph{Mathematics of
  computation}, vol.~18, no.~85, pp. 50--64, 1964.

\bibitem{DK-KHN:95}
D.~Mittenhuber and K.-H. Neeb, ``On the exponential function of an ordered
  manifold with affine connection,'' \emph{Mathematische Zeitschrift}, vol.
  218, no.~1, pp. 1--24, 1995.

\end{thebibliography}

\end{document}